\newcommand{\fixme}[2]{\ifx&#2&{\color{red}#1}\else{\color{red}FIXME\{}#1{\color{red}\}}\footnote{{\color{red}#2}}\PackageWarning{Fixme}{#1: #2}\fi}
\newcommand{\Arikan}{Ar\i{}kan}
\newcommand{\binary}[1]{\langle #1 \rangle_2}
\newcommand{\dominates}{\succeq}
\newcommand{\upgraded}{\succeq}
\newcommand{\GF}{\mathrm{GF}}
\newcommand{\Encode}{\mathsf{Encode}}
\newcommand{\be}[1]{\begin{equation}\label{#1}}
  \newcommand{\ee}{\end{equation}}
\DeclareMathAlphabet{\mathbfsl}{OT1}{ppl}{b}{it} 
\theoremstyle{plain} 
\newtheorem{thm}{Theorem\hspace{-1pt}} 
\newenvironment{theorem}
{\begin{thm}\hspace*{-1ex}}{\end{thm}}
\newtheorem{lem}[thm]{Lemma\hspace{-1.75pt}}
\newenvironment{lemma}{\begin{lem}\hspace*{-1ex}}{\end{lem}}
\newtheorem{prop}[thm]{Proposition$\!$}
\newtheorem{cor}[thm]{Corollary$\!$}
\newenvironment{corollary}{\begin{cor}\hspace*{-1ex}}{\end{cor}}
\newtheorem{defn}{Definition$\!$}
\newcounter{enumrom}
\renewcommand{\theenumrom}{(\roman{enumrom})}
\newcommand{\bfu}{\mathbf{u}}
\newcommand{\bfv}{\mathbf{v}}
\newcommand{\bfx}{\mathbf{x}}
\newcommand{\reverse}[1]{\overset{{}_{\shortleftarrow}}{#1}}
\newcommand{\GReversed}{G_{\mathrm{rv}}}
\newcommand{\GNonReversed}{G_{\mathrm{nrv}}}
\newcommand{\myset}[1]{\left\{ #1 \right\}}
\newcommand{\mysett}[1]{\{ #1 \}}
\newcommand{\myspan}{\mathrm{span}}
\newcommand{\encoder}{\mathcal{E}}
\newcommand{\encoderReversed}{\encoder_{\mathrm{rv}}}
\newcommand{\encoderNonReversed}{\encoder_{\mathrm{nrv}}}
\newcommand{\then}{\Longrightarrow}
\newcommand{\myfloorr}[1]{\lfloor#1\rfloor}
\newcommand{\calX}{\mathcal{X}}
\newcommand{\calY}{\mathcal{Y}}
\newcommand{\calZ}{\mathcal{Z}}
\begin{document}

\title{Flexible and Low-Complexity Encoding and Decoding of Systematic Polar Codes}
\author{%
    Gabi~Sarkis, %
    Ido~Tal,~\IEEEmembership{Member,~IEEE}, %
    Pascal~Giard,~\IEEEmembership{Student~Member,~IEEE}, %
    Alexander~Vardy,~\IEEEmembership{Fellow,~IEEE}, %
    Claude~Thibeault,~\IEEEmembership{Senior~Member,~IEEE}, and %
    Warren~J.~Gross,~\IEEEmembership{Senior~Member,~IEEE}
\thanks{G. Sarkis, P. Giard, and W. J. Gross are with the Department of Electrical and Computer Engineering, McGill University, Montr\'{e}al, QC H3A 0E9, Canada (e-mails: \{gabi.sarkis, pascal.giard\}@mail.mcgill.ca, warren.gross@mcgill.ca).}
\thanks{I. Tal is with the Technion---Israel Institute of Technology, Haifa 32000, Israel. (e-mail: idotal@ee.technion.ac.il).}
\thanks{A. Vardy is with the Department of Electrical and Computer Engineering
and the Department of Computer Science and Engineering, University of California at San Diego, La Jolla, CA 92093, USA (e-mail: avardy@ucsd.edu).}
\thanks{C. Thibeault is with the Department of Electrical Engineering, \'Ecole de technologie sup\'erieure, Montr\'eal, QC H3C 1K3, Canada (e-mail: claude.thibeault@etsmtl.ca).}}
\maketitle

\begin{abstract}
  In this work, we present hardware and software implementations of flexible polar systematic encoders and decoders. The proposed implementations operate on polar codes of any length less than a maximum and of any rate.
We describe the low-complexity, highly parallel, and flexible systematic-encoding algorithm that we use and prove its correctness.
  Our hardware implementation results show that the overhead of adding code rate and length flexibility is little, and the impact on operation latency minor compared to code-specific versions. Finally, the flexible software encoder and decoder implementations are also shown to be able to maintain high throughput and low latency.
\end{abstract}
\begin{IEEEkeywords}
  polar codes, systematic encoding, multi-code encoders, multi-code decoders.
\end{IEEEkeywords}

\section{Introduction}
\label{sec:intro}
Modern communication systems must cope with varying channel conditions and differing throughput and transmission latency constraints.
The 802.11-2012 wireless communication standard, for example, requires more than twelve error-correction configurations, increasing implementation complexity \cite{Lee2008,Condo2013}. Such a requirement necessitates encoder and decoder implementations that are flexible in code rate and length.

Polar codes achieve the symmetric capacity of memoryless channels with an explicit construction and are decoded with the low-complexity successive-cancellation decoding algorithm \cite{Arikan2009}. In this paper, we show that apart from the above favorable properties, polar codes are highly amenable to flexible encoding and decoding. That is, their regular structure enables encoder and decoder implementations that support any polar code of any rate and length, under the constraint of a maximal codeword length.

Systematic polar coding was described in \cite{Arikan2011} as a method to ease information extraction and improve bit-error rate without affecting the frame-error rate.
The systematic encoding scheme originally proposed in \cite{Arikan2011} is serial by nature, and seems non-trivial to parallelize, unless restricted to a single polar code of fixed rate
and length. The serial nature of this encoding ($O(n \cdot \log n)$ time-complexity, where $n$ is the code length) places a speed limit on the encoding process which gets worse with increasing code length.
To address this, a new systematic encoding algorithm that is easy to parallelize was first described in \cite{Sarkis2014}. This algorithm is both parallel and flexible in code rate. In this work, we extend the flexibility to code length as well and provide hardware and software implementations that achieve throughput values of 29 Gbps and 10 Gbps, respectively.

We dedicate a portion of this work to proving the correctness of the systematic encoding algorithm presented in \cite{Sarkis2014}. We prove that it results in valid systematic polar codewords when the sub-matrix of the encoding matrix with rows and columns corresponding to information bit indices is an involution. We prove that this condition is satisfied for both polar and Reed-Muller codes since they both satisfy a property we call \emph{domination contiguity}, which we prove is a sufficient condition for the involution to be true.

This paper is organized into two parts addressing flexible encoding and decoding, respectively. The first part starts with Section~\ref{sec:background} where we define some preliminary notation and contrast the implementation of the original systematic encoder presented in \cite{Arikan2011} with that of \cite{Sarkis2014}. Note that reading \cite{Sarkis2014} or \cite{Arikan2011} \emph{is not} a prerequisite to reading the current paper, since we summarize the key points needed from those papers in Section~\ref{sec:background}. Section~\ref{sec:systematicEncodingDefs} is mainly about setting notation and casting the various operations needed in matrix form. In Section~\ref{sec:dominationContiguiryImpliesInvolution}, we define the property of domination contiguity, and prove that our algorithm works---in both natural and bit-reversed modes---if this property is satisfied. The fact that domination contiguity indeed holds for polar codes is proved in Section~\ref{sec:polarCodesSatisfyDominationContiguity}. With correctness of the algorithm proved, flexible hardware and software systematic encoder implementations are presented in Sections \ref{sec:flex-enc} and \ref{sec:flex-sw-enc}.

The second part of this paper deals with flexibility of decoders with respect to codeword length.
Sections~\ref{sec:hw-dec} and \ref{sec:sw-dec} discuss such flexibility with respect to hardware and software implementations of the state-of-the-art fast simplified successive-cancellation (Fast-SSC) decoding algorithm, respectively.
The rate and length flexible hardware implementations we present have the same latency and throughput as their rate-only flexible counterparts and incur only a minor increase in complexity. The proposed flexible software decoders can achieve 73\% the throughput of the code-specific decoders.

We would like to mention that some of the proofs presented in this paper were arrived at independently by Li et al. in \cite{Li2015a} and \cite{Li2015b}. Specifically, the result that is most relevant to our setting in \cite{Li2015a} is Theorem 1 thereof as well as the two corollaries that follow. The closest analog in our paper to these results is what one can deduce by combining equations \eqref{eq:FijAndDomination}, \eqref{eq:binaryDominationImpliesUpgradation}, and \eqref{eq:upgradedImpliesBetterBhattacharyya}. However, in contrast to \cite{Li2015a}, our proof is more general since we do not limit ourselves to constructing the polar code via the Bhattacharyya parameter. Also, the results of \cite{Li2015a} are \emph{not} used in that paper for efficient systematic encoding. A systematic encoder based on these results was given later in \cite{Li2015b}, although that encoder is not as amenable to flexible parallel implementation as the encoder proposed in \cite{Sarkis2014} and this paper. We also note that Proposition 3 of \cite{Li2015b} is analogous to our Theorem~\ref{theo:dominationContiguousImpliesInvolution}, although the proofs are different.

\section{Background}
\label{sec:background}
We start by defining what we mean by a ``systematic encoder'', with respect to a general linear code. For integers $0 < k \leq n$, let $G = G_{k \times n}$ denote a $k \times n$ binary matrix with rank $k$. The notation $G$ is used to denote a generator matrix. Namely, the code under consideration is
\[
\myspan(G) = \myset{ \bfv \cdot G \mid \bfv \in \GF(2)^k} \; .
\]

An encoder
\[
\encoder \colon \GF(2)^k \to \myspan(G)
\]
is a one-to-one function mapping an \emph{information bit vector}
\[
\bfu=(u_0,u_1,\ldots,u_{k-1}) \in \GF(2)^k
\]
to a \emph{codeword}
\[
\bfx=(x_0,x_1,\ldots,x_{n-1}) \in \myspan(G) \; .
\]
All the encoders discussed in this paper are linear. Namely, all can be written in the form 
\begin{equation}
\label{eq:generalDecoder}
\encoder(\bfu) = \bfu \cdot \Pi \cdot G \; ,
\end{equation}
where $\Pi = \Pi_{k \times k}$ is an invertible matrix defined over $\GF(2)$.

The encoder $\encoder$ is systematic if there exists a set of $k$ \emph{systematic indices}
\begin{equation}
  \label{eq:Sdefinition}
  S = \mysett{s_j}_{j=0}^{k-1} \; , \quad 0 \leq s_0 < s_1 < \cdots < s_{k-1} \leq n-1 \; ,
\end{equation}
such that restricting $\encoder(\bfu)$ to the indices $S$ yields $\bfu$. Specifically, position $s_i$ of $\bfx$ must contain $u_i$. Note that our definition of ``systematic'' is stronger than some definitions. That is, apart from requiring that the information bits be embedded in the codeword, we further require that the embedding is in the natural order: $u_i$ appears before $u_j$ if $i<j$.

Since $G$ has rank $k$, there exist $k$ linearly independent columns in $G$. Thus, we might naively take $\Pi$ as the inverse of these columns, take $S$ as the indices corresponding to these columns, and state that we are done. Of course, the point of \cite{Sarkis2014} and \cite{Arikan2011} is to show that the calculations involved can be carried out efficiently with respect to the computational model considered. We now briefly present and discuss these two solutions. 

\subsection{The \Arikan\ systematic encoder \cite{Arikan2011}}
Recall \cite{Arikan2009} that a generator matrix of a polar code is obtained as follows. We define the \Arikan\ kernel matrix as
\begin{equation}
\label{eq:F1}
F = \begin{bmatrix}1 & 0\\1 & 1\end{bmatrix} \; .
\end{equation}
The $m$-th Kronecker product of $F$ is denoted $F^{\otimes m}$ and is defined recursively as
\begin{equation}
  \label{eq:FRecursiveDefintion}
  F^{\otimes m} =
  \begin{bmatrix}
    F^{\otimes (m-1)} & 0 \\
    F^{\otimes (m-1)} & F^{\otimes (m-1)} 
  \end{bmatrix} \; , \quad \mbox{where $F^{\otimes 1} = F$} \; .
\end{equation}

From this point forward, we adopt the shorthand 
\[
m \triangleq \log_2 n \; .
\]
In order to construct a polar code of length $n=2^m$, we apply a bit-reversing operation \cite{Arikan2009} to the columns of $F^{\otimes m}$. From the resulting matrix, we erase the $n-k$ rows corresponding to the frozen indices. The resulting $k \times n$ matrix is the generator matrix.

A closely related variant is a code for which the column bit-reversing operation is not carried out. We follow \cite{Arikan2011} and present the encoder there in the context of a non-reversed polar code. Let
the complement of the frozen index set be denoted by
\begin{equation}
  \label{eq:ANotation}
  A = \mysett{\alpha_j}_{j=0}^{k-1} \; , \quad 0 \leq \alpha_0 < \alpha_1 < \cdots < \alpha_{k-1} \leq n-1 \; .
\end{equation}
The set $A$ is termed the set of \emph{active rows}.

A simple observation which is key is that the matrix $F^{\otimes m}$ is lower triangular with all diagonal entries equal to $1$. This is easily proved by induction using the definition of $F$ and (\ref{eq:FRecursiveDefintion}). An immediate corollary is the following. Suppose we start with $F^{\otimes m}$ and keep only the rows indexed by $A$ (thus obtaining the generator matrix $G$). From this matrix, we keep only the $k$ columns indexed by $A$. We are left with a $k \times k$ lower triangular matrix with all diagonal entries equal to $1$. Specifically, we are left with an invertible matrix. In the setting of (\ref{eq:generalDecoder}) and (\ref{eq:Sdefinition}) we have that $\Pi$ is the inverse of this matrix and the set $S$ of systematic indices simply equals $A$.

As previously mentioned, the above description is not enough: we must show an efficient implementation. We now briefly outline the implementation in \cite{Arikan2011}, which results in an encoding algorithm running in time $O(n \cdot \log n)$. Let us recall our goal, we must find a codeword $\bfx=(x_0,x_1,\ldots,x_{n-1})$ such that, using the notation in (\ref{eq:ANotation}), we have that  $x_{\alpha_i}$ equals $u_i$. Since $\bfx$ is a codeword, it is the result of multiplying the generator matrix $G$ by some length-$k$ vector from the left. As mentioned, $G$ is obtained by removing from $F^{\otimes m}$ the rows whose index is not contained in $A$. Thus, we can alternatively state our goal as follows. We must find a codeword $\bfx$ as described above such that $\bfx = \bfv \cdot F^{\otimes m}$, where $\bfv = (v_0,v_1,\ldots,v_{n-1})$ is such that $v_i = 0$ whenever $i \not\in A$.

We will now show a recursive implementation to the systematic encoding function $\Encode_m(\bfu,A)$. Let us start by considering the stopping condition, $m=0$. As a preliminary step, define $F^{\otimes 0} = 1$, a $1 \times 1$ matrix. Note that this definition is consistent with (\ref{eq:F1}) and (\ref{eq:FRecursiveDefintion}) for $m=1$. Next, note that if $m=0$, then the problem is trivial: if $A$ is empty than we are forced to have $\bfv = (0)$, and thus $\Encode_0(\bfu,A)$ returns $\bfx = (0)$, the all-zero codeword. Otherwise, $A=\{0\}$ and we simply take $\bfv = (u_0)$ and return $\bfx = (u_0)$, as prescribed.

Let us now consider the recursion itself. Assume $m \geq 1$ and write $\bfv = (\bfv',\bfv'')$, where $\bfv'$ consists of the first $n/2$ entries of $\bfv$ and $\bfv''$ consists of the last $n/2$ entries of $\bfv$. Let $\bfx = (\bfx',\bfx'')$ be defined similarly. By the block structure in (\ref{eq:FRecursiveDefintion}), we have that
\begin{IEEEeqnarray}{rCl}
\bfx'' &=& \bfv'' \cdot F^{\otimes (m-1)} \label{eq:bfx''}\\
\bfx' &=& \bfv' \cdot F^{\otimes (m-1)} + \bfx'' \label{eq:bfx'}\; .
\end{IEEEeqnarray}
We will find $\bfx$ by first finding $\bfx''$ and then finding $\bfx'$. Towards that end, let
\begin{align}
A' &= \{ \alpha : \alpha \in A \;\; \mbox{and} \;\; \alpha < n/2\},\nonumber\\
A''&= \{ \alpha - n/2 : \alpha \in A \;\; \mbox{and} \;\; \alpha \geq n/2\}.\nonumber
\end{align}
Finding $\bfx''$ is a straightforward recursive process. Namely, by (\ref{eq:bfx''}) if we define $\bfu'' = (u_{i+n/2})_{i \in A''}$, then $\bfx''=\Encode_{m-1}(\bfu'',A'')$. Now, with $\bfx''$ calculated, we can find $\bfx'$. Namely, considering (\ref{eq:bfx'}), we need a $\bfv'$ for which entry $\alpha_i$ of $\bfv' \cdot F^{\otimes (m-1)}$ equals $u_i + x''_{\alpha_i}$. Thus, defining $\bfu' = (u_i + x''_{\alpha_i})_{i \in A'}$, we have that $\bfx'=\Encode_{m-1}(\bfu',A')$.

The main point we want to stress about the above encoder is the \emph{serial} nature of it: \emph{first} calculate $\bfx''$ and \emph{only after} that is done, calculate $\bfx'$.

A parallel, higher complexity implementation of the algorithm in \cite{Arikan2011}, when the frozen bits are set to $0$, can calculate the parity bits directly using matrix multiplication:
\[
\bfx_{A^c} = \bfu (F^{\otimes m}_{AA})^{-1} F^{\otimes m}_{AA^c},
\]
where $F^{\otimes m}_{AA}$ is a sub-matrix containing rows and columns of $F^{\otimes m}$ corresponding to information-bit indices. Similarly, $F^{\otimes m}_{AA^c}$ contains the rows and columns of $F^{\otimes m}$ that correspond to information and frozen bit indices, respectively. The dimensions of $F^{\otimes m}_{AA}$ and $F^{\otimes m}_{AA^c}$ change with code rate, in contrast to our encoder which always uses the fixed $F^{\otimes m}$. A parallel multiplier that can accommodate matrices of varying dimension leads to a significant increase in implementation complexity.

\subsection{The systematic encoder \cite{Sarkis2014}}
\label{subsec:systematicEncoderDescription}
We now give a high-level description of the encoder in \cite{Sarkis2014}. As before, we consider a non-reversed setting. Recall that $A$ in (\ref{eq:ANotation}) is the set of active row indices.
\begin{enumerate}
\item We first expand $\bfu = (u_0,u_1,\ldots,u_{k-1})$ into a vector $\bfv_\mathrm{I}$ of length $n$ as follows: for all $0 \leq i < k$ we set entry $\alpha_i$ of $\bfv_\mathrm{I}$ equal to $u_i$. The remaining $n-k$ entries of $\bfv_\mathrm{I}$ are set to $0$. \label{it:expand}
\item We calculate $\bfv_\mathrm{II} = \bfv_\mathrm{I} \cdot F^{\otimes m}$.
\item The vector $\bfv_\mathrm{III}$ is gotten from $\bfv_\mathrm{II}$ by setting all entries not in $A$ to zero. \label{it:zeroout}
\item We return $\bfx = \bfv_\mathrm{III} \cdot F^{\otimes m}$.
\end{enumerate}
Clearly, steps \ref{it:expand} and \ref{it:zeroout} can be implemented very efficiently in any computational model. The interesting part is calculations of the form $\bfv \cdot F^{\otimes m}$, for a vector $\bfv$ of length $n$.

\begin{figure}
  \centering
  \newcommand{\ubit}[1]{$u_{#1}$}
\newcommand{\fbit}[1]{\color{gray}$u_{#1}$}
\newcommand{\ucw}[1]{$x_{#1}$}
\newcommand{\fcw}[1]{\color{gray}$x_{#1}$}
\newcommand{\ub}[1]{$#1$}
\newcommand{\fb}[1]{\color{gray}$#1$}

\begin{tikzpicture}[scale=0.8,every node/.style={scale=0.8}]

\usetikzlibrary{shapes,positioning,arrows,decorations.markings,fit}

\definecolor{varnode_fill}{RGB}{0,0,0}
\definecolor{chknode_fill}{RGB}{255,255,255}

\tikzset{
  chk/.style={draw,fill=chknode_fill,circle,minimum size=0.3cm, inner sep=0},
  var/.style={draw,fill=varnode_fill,circle,minimum size=0.1cm, inner sep=0},
  sep/.style={rectangle,minimum width=0.25cm, inner sep=0},
  empty/.style={rectangle, inner sep=0},
  bit/.style={circle, inner sep = 0}
}

\matrix[row sep=1mm, column sep=1mm] {
  \node[bit] (n0s0) {\fb{0}};    & \node[sep] {}; & \node[chk] (n0s1) {\scriptsize$+$};  & \node[sep] {}; &                                     & \node[chk] (n0s2) {\scriptsize$+$}; & \node[sep] {}; &                                     &                                     &                                     & \node[chk] (n0s3) {\scriptsize$+$}; & \node[sep] {}; & \node[bit] (xn0s0) {\fb{0}};   & \node[sep] {}; & \node[chk] (xn0s1) {\scriptsize$+$};  & \node[sep] {}; &                                      & \node[chk] (xn0s2) {\scriptsize$+$}; & \node[sep] {}; &                                      &                                      &                                      & \node[chk] (xn0s3) {\scriptsize$+$}; & \node[sep] {}; & \node[bit] (xn0s4) {$x_0$};\\
  \node[bit] (n1s0) {\fb{0}};    & \node[sep] {}; & \node[var] (n1s1) {};                & \node[sep] {}; & \node[chk] (n1s2) {\scriptsize$+$}; &                                     & \node[sep] {}; &                                     &                                     & \node[chk] (n1s3) {\scriptsize$+$}; &                                     & \node[sep] {}; & \node[bit] (xn1s0) {\fb{0}};   & \node[sep] {}; & \node[var] (xn1s1) {};                & \node[sep] {}; & \node[chk] (xn1s2) {\scriptsize$+$}; &                                      & \node[sep] {}; &                                      &                                      & \node[chk] (xn1s3) {\scriptsize$+$}; &                                      & \node[sep] {}; & \node[bit] (xn1s4) {$x_1$};\\
  \node[bit] (n2s0) {\fb{0}};    & \node[sep] {}; & \node[chk] (n2s1) {\scriptsize$+$};  & \node[sep] {}; &                                     & \node[var] (n2s2) {};               & \node[sep] {}; &                                     & \node[chk] (n2s3) {\scriptsize$+$}; &                                     &                                     & \node[sep] {}; & \node[bit] (xn2s0) {\fb{0}};   & \node[sep] {}; & \node[chk] (xn2s1) {\scriptsize$+$};  & \node[sep] {}; &                                      & \node[var] (xn2s2) {};               & \node[sep] {}; &                                      & \node[chk] (xn2s3) {\scriptsize$+$}; &                                      &                                      & \node[sep] {}; & \node[bit] (xn2s4) {$x_2$};\\
  \node[bit] (n3s0) {\ub{u_0}};  & \node[sep] {}; & \node[var] (n3s1) {};                & \node[sep] {}; & \node[var] (n3s2) {};               &                                     & \node[sep] {}; & \node[chk] (n3s3) {\scriptsize$+$}; &                                     &                                     &                                     & \node[sep] {}; & \node[bit] (xn3s0) {};         & \node[sep] {}; & \node[var] (xn3s1) {};                & \node[sep] {}; & \node[var] (xn3s2) {};               &                                      & \node[sep] {}; & \node[chk] (xn3s3) {\scriptsize$+$}; &                                      &                                      &                                      & \node[sep] {}; & \node[bit] (xn3s4) {$u_0$};\\
  \node[bit] (n4s0) {\ub{u_1}};  & \node[sep] {}; & \node[chk] (n4s1) {\scriptsize$+$};  & \node[sep] {}; &                                     & \node[chk] (n4s2) {\scriptsize$+$}; & \node[sep] {}; &                                     &                                     &                                     & \node[var] (n4s3) {};               & \node[sep] {}; & \node[bit] (xn4s0) {};         & \node[sep] {}; & \node[chk] (xn4s1) {\scriptsize$+$};  & \node[sep] {}; &                                      & \node[chk] (xn4s2) {\scriptsize$+$}; & \node[sep] {}; &                                      &                                      &                                      & \node[var] (xn4s3) {};               & \node[sep] {}; & \node[bit] (xn4s4) {$u_1$};\\
  \node[bit] (n5s0) {\ub{u_2}};  & \node[sep] {}; & \node[var] (n5s1) {};                & \node[sep] {}; & \node[chk] (n5s2) {\scriptsize$+$}; &                                     & \node[sep] {}; &                                     &                                     & \node[var] (n5s3) {};               &                                     & \node[sep] {}; & \node[bit] (xn5s0) {};         & \node[sep] {}; & \node[var] (xn5s1) {};                & \node[sep] {}; & \node[chk] (xn5s2) {\scriptsize$+$}; &                                      & \node[sep] {}; &                                      &                                      & \node[var] (xn5s3) {};               &                                      & \node[sep] {}; & \node[bit] (xn5s4) {$u_2$};\\
  \node[bit] (n6s0) {\ub{u_3}};  & \node[sep] {}; & \node[chk] (n6s1) {\scriptsize$+$};  & \node[sep] {}; &                                     & \node[var] (n6s2) {};               & \node[sep] {}; &                                     & \node[var] (n6s3) {};               &                                     &                                     & \node[sep] {}; & \node[bit] (xn6s0) {};         & \node[sep] {}; & \node[chk] (xn6s1) {\scriptsize$+$};  & \node[sep] {}; &                                      & \node[var] (xn6s2) {};               & \node[sep] {}; &                                      & \node[var] (xn6s3) {};               &                                      &                                      & \node[sep] {}; & \node[bit] (xn6s4) {$u_3$};\\
  \node[bit] (n7s0) {\ub{u_4}};  & \node[sep] {}; & \node[var] (n7s1) {};                & \node[sep] {}; & \node[var] (n7s2) {};               &                                     & \node[sep] {}; & \node[var] (n7s3) {};               &                                     &                                     &                                     & \node[sep] {}; & \node[bit] (xn7s0) {};         & \node[sep] {}; & \node[var] (xn7s1) {};                & \node[sep] {}; & \node[var] (xn7s2) {};               &                                      & \node[sep] {}; & \node[var] (xn7s3) {};               &                                      &                                      &                                      & \node[sep] {}; & \node[bit] (xn7s4) {$u_4$};\\
};

\path[-] (n0s0) edge (n0s1) (n0s1) edge (n0s2) (n0s2) edge (n0s3) (n0s3) edge (xn0s0);
\path[-] (n1s0) edge (n1s1) (n1s1) edge (n1s2) (n1s2) edge (n1s3) (n1s3) edge (xn1s0);
\path[-] (n2s0) edge (n2s1) (n2s1) edge (n2s2) (n2s2) edge (n2s3) (n2s3) edge (xn2s0);
\path[-] (n3s0) edge (n3s1) (n3s1) edge (n3s2) (n3s2) edge (n3s3) (n3s3) edge (xn3s1);
\path[-] (n4s0) edge (n4s1) (n4s1) edge (n4s2) (n4s2) edge (n4s3) (n4s3) edge (xn4s1);
\path[-] (n5s0) edge (n5s1) (n5s1) edge (n5s2) (n5s2) edge (n5s3) (n5s3) edge (xn5s1);
\path[-] (n6s0) edge (n6s1) (n6s1) edge (n6s2) (n6s2) edge (n6s3) (n6s3) edge (xn6s1);
\path[-] (n7s0) edge (n7s1) (n7s1) edge (n7s2) (n7s2) edge (n7s3) (n7s3) edge (xn7s1);

\path[-] (n0s1) edge (n1s1);
\path[-] (n2s1) edge (n3s1);
\path[-] (n4s1) edge (n5s1);
\path[-] (n6s1) edge (n7s1);

\path[-] (n0s2) edge (n2s2);
\path[-] (n1s2) edge (n3s2);
\path[-] (n4s2) edge (n6s2);
\path[-] (n5s2) edge (n7s2);

\path[-] (n0s3) edge (n4s3);
\path[-] (n1s3) edge (n5s3);
\path[-] (n2s3) edge (n6s3);
\path[-] (n3s3) edge (n7s3);


\path[-] (xn0s0) edge (xn0s1) (xn0s1) edge (xn0s2) (xn0s2) edge (xn0s3) (xn0s3) edge (xn0s4);
\path[-] (xn1s0) edge (xn1s1) (xn1s1) edge (xn1s2) (xn1s2) edge (xn1s3) (xn1s3) edge (xn1s4);
\path[-] (xn2s0) edge (xn2s1) (xn2s1) edge (xn2s2) (xn2s2) edge (xn2s3) (xn2s3) edge (xn2s4);
\path[-] (xn3s0) edge (xn3s1) (xn3s1) edge (xn3s2) (xn3s2) edge (xn3s3) (xn3s3) edge (xn3s4);
\path[-] (xn4s0) edge (xn4s1) (xn4s1) edge (xn4s2) (xn4s2) edge (xn4s3) (xn4s3) edge (xn4s4);
\path[-] (xn5s0) edge (xn5s1) (xn5s1) edge (xn5s2) (xn5s2) edge (xn5s3) (xn5s3) edge (xn5s4);
\path[-] (xn6s0) edge (xn6s1) (xn6s1) edge (xn6s2) (xn6s2) edge (xn6s3) (xn6s3) edge (xn6s4);
\path[-] (xn7s0) edge (xn7s1) (xn7s1) edge (xn7s2) (xn7s2) edge (xn7s3) (xn7s3) edge (xn7s4);

\path[-] (xn0s1) edge (xn1s1);
\path[-] (xn2s1) edge (xn3s1);
\path[-] (xn4s1) edge (xn5s1);
\path[-] (xn6s1) edge (xn7s1);

\path[-] (xn0s2) edge (xn2s2);
\path[-] (xn1s2) edge (xn3s2);
\path[-] (xn4s2) edge (xn6s2);
\path[-] (xn5s2) edge (xn7s2);

\path[-] (xn0s3) edge (xn4s3);
\path[-] (xn1s3) edge (xn5s3);
\path[-] (xn2s3) edge (xn6s3);
\path[-] (xn3s3) edge (xn7s3);

\end{tikzpicture}
  \caption{The systematic encoder of \cite{Sarkis2014} for an (8, 5) polar code.}
  \label{fig:enc}
\end{figure}

As we will expand on later, the main merit of \cite{Sarkis2014} is that the computation of $\bfv \cdot F^{\otimes m}$ can be done \emph{in parallel}. Namely, if $\bfv = (\bfv',\bfv'')$, where $\bfv'$ (respectively, $\bfv''$) equals the first (respectively, last) $n/2$ entries of $\bfv$, then one can calculate $\bfv' \cdot F^{\otimes (m-1)}$ and $\bfv'' \cdot F^{\otimes (m-1)}$ \emph{concurrently} and then, by (\ref{eq:FRecursiveDefintion}), combine the results  to get 
\[
\bfv \cdot F^{\otimes m} = ([\bfv' \cdot F^{\otimes (m-1)}] + [\bfv'' \cdot F^{\otimes (m-1)}], [\bfv'' \cdot F^{\otimes (m-1)}]) \; .
\]

We also note that the systematic encoder in \cite{Sarkis2014} is easily described as two applications of a non-systematic encoder, with a zeroing operation applied in-between. Thus, any advances made with respect to non-systematic encoding of polar codes immediately yield advances in systematic encoding.

Lastly, we state that both the encoder presented in \cite{Arikan2011} as well as the one presented in \cite{Sarkis2014} produce the same codeword when given the same information vector. To see this, note that on the one hand, both encoders operate with respect to the same code of dimension $k$. That is, with respect to the same generator matrix $G$ described above. On the other hand, by definition, both encoders produce the same output when restricted to the $k$ systematic indexes $A$ of the codeword. That is, to $k$ indexes such that restricting the generator matrix $G$ to them results in a $k \times k$ invertible matrix, as previously explained. Thus, the error-correction performance of a system utilizing the same decoder with either encoder remains the same

\section{Systematic, reversed, and non-reversed codes}
\label{sec:systematicEncodingDefs}
This section is devoted to recasting the concepts and operation presented in the previous section into matrix terminology. We start by discussing a general linear code, and then specialize to both non-reversed and reversed polar codes. Recalling the definition of $S$ as the set of systematic indices, define the \emph{restriction matrix} $R = R_{n \times k}$ corresponding to $S$ as 
\begin{equation}
  \label{eq:Rdefinition}
  R=(R_{i,j})_{i=0}^{n-1}{}_{j=0}^{k-1} \; , \quad \mbox{where} \quad 
  R_{i,j} =
  \begin{cases}
    1 & \mbox{if $i = s_j$} \; , \\
    0 & \mbox{otherwise} \; .
  \end{cases}
\end{equation}
With this definition at hand, we require that a systematic encoder satisfy $\encoder(\bfu) \cdot R = \bfu$, or equivalently that
\begin{equation}
  \label{eq:matrixDefinitionOfSystematic}
  \Pi \cdot G \cdot R = I \; ,
\end{equation}
where $I$ above denotes the $k \times k$ identity matrix. Our proofs will center on showing that (\ref{eq:matrixDefinitionOfSystematic}) holds. 

\subsection{Non-reversed polar codes}
In this subsection, we consider a non-reversed polar code. Recall the definition in (\ref{eq:ANotation}) of $A$ being the set of active rows, where the $j$th smallest element of $A$ is denoted $\alpha_j$. For this case, recall that we define $S$ as equal to $A$ and $s_j$ as equal to $\alpha_j$.

Define the matrix $E$ as
\begin{equation}
  \label{eq:Edefinition}
  E=(E_{i,j})_{i=0}^{k-1}{}_{j=0}^{n-1} \; , \quad \mbox{where} \quad 
  E_{i,j} =
  \begin{cases}
    1 & \mbox{if $j = \alpha_i$} \; , \\
    0 & \mbox{otherwise} \; .
  \end{cases}
\end{equation}
The matrix $E$ will be useful in several respects. First, note by the above that applying $E$ to the left of a matrix with $n$ rows results in a submatrix containing only the rows indexed by $A$.
Thus, we have that
\begin{equation}
  \label{eq:GNonReversedDef}
  \GNonReversed = E \cdot F^{\otimes m} \; ,
\end{equation}
where $\GNonReversed$ is the generator matrix of our code, and ``nrv'' is short for ``non-reversed''.

Next, note that by applying $E$ to the right of a vector $\bfu$ of length $k$, we manufacture a vector $\bfv_\mathrm{I}$ such that the entries indexed by $\alpha_j$ equal $u_j$ and all other entries equal zero. That is,
\[
\bfv_\mathrm{I} = \bfu \cdot E \; ,
\]
as per step \ref{it:expand} of the algorithm described in Subsection~\ref{subsec:systematicEncoderDescription}. Because of this property, we refer to $E$ as the \emph{expanding matrix}. 

Let us move on to step \ref{it:zeroout} of the algorithm. Simple algebra yields that
\[
\bfv_\mathrm{III} = \bfv_\mathrm{II} \cdot E^T \cdot E \; .
\]
That is, multiplying a vector of length $n$ from the right by $E^T \cdot E$ results in a vector in which the entries indexed by $A$ remain the same while the entries not indexed by $A$ are set to zero.

The above equations yield a succinct description of our algorithm,
\begin{equation}
  \label{eq:encoderNonReversedDef}
  \encoderNonReversed(\bfu) = \bfu \cdot \underbrace{ E \cdot F^{\otimes m} \cdot E^T}_{\Pi} \cdot \underbrace{E \cdot F^{\otimes m}}_{\GNonReversed} \; .
\end{equation}
We end this section by noting that by (\ref{eq:Rdefinition}) and (\ref{eq:Edefinition}), we have that
\[
E^T = R \; .
\]
Thus, recalling (\ref{eq:matrixDefinitionOfSystematic}), our aim is to prove that  
\begin{equation}
  \label{eq:nonreversedEncoderInvolution}
  E \cdot F^{\otimes m} \cdot E^T \cdot E \cdot  F^{\otimes m} \cdot E^T = I \; .
\end{equation}
Showing this will further imply that the corresponding $\Pi$ in (\ref{eq:encoderNonReversedDef}) is indeed invertible.

\subsection{Reversed polar codes}
As explained, we will consider bit-reversed as well as non-bit-reversed polar codes. Let us introduce corresponding notation.
For an integer $0 \leq i < n$, denote the binary representation of $i$ as
\ifCLASSOPTIONtwocolumn
\begin{multline}
  \label{eq:binaryRepresentation}
  \binary{i} = (i_0,i_1,\ldots,i_{m-1}) \; , \\
  \mbox{where}  \; i = \sum_{j=0}^{m-1} i_j 2^j \; \mbox{and} \;  i_j \in \{0,1\} \; .
\end{multline}
\else
\begin{equation}
  \label{eq:binaryRepresentation}
  \binary{i} = (i_0,i_1,\ldots,i_{m-1}) \; ,
  \text{where}  \; i = \sum_{j=0}^{m-1} i_j 2^j \; \text{and} \;  i_j \in \{0,1\} \; .
\end{equation}
\fi
For $i$ as above, we define $\reverse{i}$ as the integer with reversed binary representation. That is,
\[
\binary{\reverse{i}} =  (i_{m-1},i_{m-2},\ldots,i_0)\; , \quad \reverse{i} = \sum_{j=0}^{m-1} i_j 2^{m-1-j} \; .
\]

As in \cite{Arikan2009}, we denote the $n \times n$ bit reversal matrix as $B_n$. Recall that $B_n$ is a permutation matrix. Specifically, multiplying a matrix from the left (right) by $B_n$ results in a matrix in which row (column) $i$ equals row (column) $\reverse{i}$ of the original matrix. 

Recall that we have denoted by $A$ the set of active rows. We stress that this notation holds for \emph{both the reversed as well as the non-reversed setting}. Thus, recalling (\ref{eq:Edefinition}), we have analogously to (\ref{eq:GNonReversedDef}) that
\[
\GReversed = E  \cdot F^{\otimes m} \cdot B_n
\]
where $\GReversed$ is the generator matrix of our code, and ``rv'' is short for ``reversed''. By \cite[Proposition 16]{Arikan2009}, we know that $B_n \cdot F^{\otimes m} = F^{\otimes m} \cdot B_n$. Thus, it also holds that
\begin{equation}
  \label{eq:GReversedDef}
  \GReversed = E \cdot B_n \cdot F^{\otimes m} \; .
\end{equation}

In the interest of a lighter notation later on, we now ``fold'' the bit-reversing operation into the set $A$. Thus,
define the set of bit-reversed active rows, $\reverse{A}$, gotten from the set of active rows $A$ by applying the bit-reverse operation on each element $\alpha_i$. As before, we order the elements of $\reverse{A}$ in increasing order and denote
\begin{equation}
  \label{eq:AReversedNotation}
  \reverse{A} = \mysett{\beta_j}_{j=0}^{k-1} \; , \quad 0 \leq \beta_0 < \beta_1< \cdots < \beta_{k-1} \leq n-1 \; .
\end{equation}
Recall that the expansion matrix $E$ was defined using $A$. We now define $\reverse{E} = \reverse{E}_{k \times n}$ according to $\reverse{A}$ in exactly the same way. That is,
\begin{equation}
  \label{eq:EReversedDefinition}
  \reverse{E}=(\reverse{E}_{i,j})_{i=0}^{k-1}{}_{j=0}^{n-1} \; , \quad\!\! \mbox{where} \quad 
  \reverse{E}_{i,j} =
  \begin{cases}
    1 & \mbox{if $j = \beta_i$} \; , \\
    0 & \mbox{otherwise} \; .
  \end{cases}
\end{equation}
Note that $E \cdot B$ and $\reverse{E}$ are the same, up to a permutation of rows (for $i$ fixed, the reverse of $\alpha_i$ does not generally equal $\beta_i$, hence the need for a permutation). Thus, by (\ref{eq:GReversedDef}),
\begin{equation}
  \label{eq:GReversedPrimeDef}
  \GReversed' = \reverse{E} \cdot F^{\otimes m}
\end{equation}
is a generator matrix spanning the same code as $\GReversed$. Analogously to (\ref{eq:encoderNonReversedDef}), our encoder for the reversed code is given by
\begin{equation}
  \label{eq:encoderReversedDef}
  \encoderReversed(\bfu) = \bfu \cdot \underbrace{ \reverse{E} \cdot F^{\otimes m} \cdot (\reverse{E})^T}_{\Pi} \cdot \underbrace{\reverse{E} \cdot F^{\otimes m}}_{\GReversed'} \; .
\end{equation}
We now highlight the similarities and differences with respect to the non-reversed encoder. First, note that for the reversed encoder, the set of systematic indices is $\reverse{A}$, as opposed to $A$ for the non-reversed encoder. Apart from that, everything remains the same. Namely, conceptually, we are simply operating the non-reversed encoder with $\reverse{A}$ in place of $A$. Specifically, note that as in the non-reversed case, the encoder produces a codeword such that the information bits are embedded in the natural order.

Analogously to (\ref{eq:nonreversedEncoderInvolution}), our aim is to prove that
\begin{equation}
  \label{eq:reversedEncoderInvolution}
  \reverse{E} \cdot F^{\otimes m} \cdot (\reverse{E})^T \cdot \reverse{E} \cdot  F^{\otimes m} \cdot (\reverse{E})^T = I \; .
\end{equation}

\section{Domination contiguity implies involution}
\label{sec:dominationContiguiryImpliesInvolution}
In this section we prove that our encoders are valid by proving that (\ref{eq:nonreversedEncoderInvolution}) and (\ref{eq:reversedEncoderInvolution}) indeed hold. A square matrix is called an \emph{involution} if multiplying the matrix by itself yields the identity matrix. With this terminology at hand, we must prove that both $E \cdot F^{\otimes m} \cdot E^T$ and $\reverse{E} \cdot F^{\otimes m} \cdot (\reverse{E})^T$ are involutions.

Interestingly, and in contrast with the original systematic encoder presented in \cite{Arikan2011}, the proof of correctness centers on the structure of $A$. That is, in \cite{Arikan2011}, any set of $k$ active (non-frozen) channels has a corresponding systematic encoder. In contrast, consider as an example the case in which $n=4$ and $A=\{0,1,3\}$. By our definitions,
\[
E =  
\left[
  \begin{smallmatrix}
    1 & 0 & 0 & 0 \\
    0 & 1 & 0 & 0 \\
    0 & 0 & 0 & 1
  \end{smallmatrix}
\right]
\; , \quad 
E^T = \left[
  \begin{smallmatrix}
    1 & 0 & 0 \\
    0 & 1 & 0 \\
    0 & 0 & 0 \\
    0 & 0 & 1
  \end{smallmatrix}
\right]
\; , \quad \mbox{and} \quad
F^{\otimes 2} = \left[
  \begin{smallmatrix}
    1 & 0 & 0 & 0 \\
    1 & 1 & 0 & 0 \\
    1 & 0 & 1 & 0 \\
    1 & 1 & 1 & 1 
  \end{smallmatrix}
\right] \; .
\]
Thus,
\ifCLASSOPTIONtwocolumn
\begin{multline*}
  E \cdot F^{\otimes 2} \cdot E^T = 
  \left[
    \begin{smallmatrix}
      1 & 0 & 0 \\
      1 & 1 & 0 \\
      1 & 1 & 1 
    \end{smallmatrix}
  \right]
  \; , \quad \mbox{and}\\
  (E \cdot F^{\otimes 2} \cdot E^T)\cdot (E \cdot F^{\otimes 2} \cdot E^T) = 
  \left[
    \begin{smallmatrix}
      1 & 0 & 0 \\
      0 & 1 & 0 \\
      1 & 0 & 1 
    \end{smallmatrix}
  \right] \; .
\end{multline*}
\else
\[
  E \cdot F^{\otimes 2} \cdot E^T = 
  \left[
    \begin{smallmatrix}
      1 & 0 & 0 \\
      1 & 1 & 0 \\
      1 & 1 & 1 
    \end{smallmatrix}
  \right]
  \; , \text{ and }
  (E \cdot F^{\otimes 2} \cdot E^T)\cdot (E \cdot F^{\otimes 2} \cdot E^T) = 
  \left[
    \begin{smallmatrix}
      1 & 0 & 0 \\
      0 & 1 & 0 \\
      1 & 0 & 1 
    \end{smallmatrix}
  \right] \; .
\]
\fi
Note that the rightmost matrix above is \emph{not} an identity matrix. A similar calculation shows that $\reverse{E} \cdot F^{\otimes 2} \cdot (\reverse{E})^T$ is not an involution either.

The apparent contradiction to the correctness of our algorithms will be rectified in the next section. In brief, using terminology defined in Section~\ref{sec:polarCodesSatisfyDominationContiguity}, the fact that $A=\{0,1,3\}$ implies that $W^{+-}$ is frozen while $W^{--}$ is unfrozen. However, this cannot correspond to a valid polar code since $W^{+-}$ is upgraded with respect to $W^{--}$.

We now characterize the $A$ for which (\ref{eq:nonreversedEncoderInvolution}) and (\ref{eq:reversedEncoderInvolution}) hold. Recall our notation for binary representation given in (\ref{eq:binaryRepresentation}). For $0 \leq i,j \leq n$, denote 
\[
\binary{i} = (i_0,i_1,\ldots,i_{m -1}) \; , \quad \binary{j} = (j_0,j_1,\ldots,j_{m -1}) \; .
\]
We define the \emph{binary domination} relation, denoted $\dominates$,  as follows.
\[
i \dominates j \quad \mbox{iff for all $0 \leq t < m$, we have $i_t \geq j_t$} \; .
\]
Namely, $i \dominates j$ iff the support of $\binary{i}$ (the indices $t$ for which $i_t=1$) contains the support of $\binary{j}$.

We say that a set of indices $A \subseteq \{0,1,\ldots,n-1\}$ is \emph{domination contiguous} if for all $h,j \in A$ and for all $0 \leq i < n$ such that $h \dominates i$ and $i \dominates j$, it holds that $i \in A$. For easy reference:
\begin{equation}
  \label{eq:dominationContiguous}
  (h,j \in A \quad \mbox{and} \quad h \dominates i \dominates j) \then i \in A \; .
\end{equation}
\begin{theorem}
  \label{theo:dominationContiguousImpliesInvolution}
  Let the active rows set $A \subseteq \{0,1,\ldots,n-1\}$ be domination contiguous, as defined in (\ref{eq:dominationContiguous}). Let $E$ and $\reverse{E}$ be defined according to (\ref{eq:ANotation}), (\ref{eq:Edefinition}), (\ref{eq:AReversedNotation}), and (\ref{eq:EReversedDefinition}). Then, $E \cdot F^{\otimes m} \cdot E^T$ and $\reverse{E} \cdot F^{\otimes m} \cdot (\reverse{E})^T$ are involutions. That is, (\ref{eq:nonreversedEncoderInvolution}) and (\ref{eq:reversedEncoderInvolution}) hold.
\end{theorem}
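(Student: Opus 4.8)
The plan is to reduce both involution identities to a single combinatorial counting statement about the partial order $\dominates$, and then to exploit domination contiguity to evaluate the relevant counts modulo $2$.

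The first step I would take is to record the fundamental fact that the entries of $F^{\otimes m}$ encode domination: for all $0 \le i, j < n$,
\[
(F^{\otimes m})_{i,j} = 1 \iff i \dominates j \; .
\]
This follows from the Kronecker structure, since $(F^{\otimes m})_{i,j} = \prod_{t=0}^{m-1} F_{i_t, j_t}$ and $F_{a,b} = 1$ exactly when $a \ge b$; the product is thus $1$ precisely when $i_t \ge j_t$ for every bit position $t$, which is the definition of $i \dominates j$. A short induction on $m$ via \eqref{eq:FRecursiveDefintion} gives the same conclusion. Next I would unwind the matrix products. Since $E$ (respectively $E^T$) extracts the rows (respectively columns) indexed by $A$, the matrix $M = E \cdot F^{\otimes m} \cdot E^T$ is exactly the $k \times k$ submatrix of $F^{\otimes m}$ supported on $A$; concretely, $M_{i,j} = 1$ precisely when $\alpha_i \dominates \alpha_j$.

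Computing the square over $\GF(2)$, I would obtain
\[
(M^2)_{i,j} = \bigl| \{ \alpha_l \in A : \alpha_i \dominates \alpha_l \dominates \alpha_j \} \bigr| \bmod 2 \; .
\]
The diagonal entries are immediate: antisymmetry of $\dominates$ forces $\alpha_l = \alpha_i$, so the count is $1$. The content of the theorem lies entirely in showing that the off-diagonal entries vanish, and this is where domination contiguity \eqref{eq:dominationContiguous} enters as the crux of the argument. Because $\alpha_i, \alpha_j \in A$, contiguity guarantees that every integer $c$ with $\alpha_i \dominates c \dominates \alpha_j$ already lies in $A$. Hence the count above equals the number of \emph{all} integers $c$ in the domination interval between $\alpha_j$ and $\alpha_i$, with no further reference to $A$. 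Translating to supports, these are exactly the $c$ with $\mathrm{supp}(\alpha_j) \subseteq \mathrm{supp}(c) \subseteq \mathrm{supp}(\alpha_i)$: if $\alpha_i \not\dominates \alpha_j$ there are none, and otherwise there are $2^d$ of them, where $d$ is the number of bit positions in the support of $\alpha_i$ but not of $\alpha_j$. For $i \ne j$, either the count is $0$, or $\alpha_i \dominates \alpha_j$ with $\alpha_i \ne \alpha_j$, forcing $d \ge 1$ and an even count; in both cases $(M^2)_{i,j} = 0$. This establishes $M^2 = I$, which is \eqref{eq:nonreversedEncoderInvolution}.

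Finally, for the reversed identity \eqref{eq:reversedEncoderInvolution} I would observe that $\reverse{E} \cdot F^{\otimes m} \cdot (\reverse{E})^T$ is the analogous submatrix of $F^{\otimes m}$ supported on $\reverse{A}$, so the identical argument applies provided $\reverse{A}$ is itself domination contiguous. This holds because bit reversal $i \mapsto \reverse{i}$ merely permutes bit positions and therefore preserves support containment: $\reverse{i} \dominates \reverse{j} \iff i \dominates j$. Thus bit reversal is an automorphism of the domination poset, it carries a domination-contiguous $A$ to a domination-contiguous $\reverse{A}$, and the reversed case follows verbatim. I expect the main obstacle to be the counting step: once one recognizes that contiguity fills in each domination interval completely and that such an interval has cardinality a power of two, the parity argument that annihilates the off-diagonal entries is the whole point.
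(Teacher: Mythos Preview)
Your proposal is correct and follows essentially the same route as the paper's proof: the characterization $(F^{\otimes m})_{i,j}=1 \iff i\dominates j$, the interpretation of $E\cdot F^{\otimes m}\cdot E^T$ as the submatrix on $A$, the use of domination contiguity to replace the count over $A$ by the count over the full domination interval, the $2^d$ cardinality of that interval, and the observation that bit reversal preserves $\dominates$ so that $\reverse{A}$ inherits contiguity. The only cosmetic differences are that the paper handles the reversed case first rather than last and splits the off-diagonal analysis into two explicit cases ($h\not\dominates j$ versus $h\dominates j$), whereas you fold these together.
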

\begin{proof}
  We first note that for $0 \leq i,j < n$, we have that $i \dominates j$ iff $\reverse{i} \dominates \reverse{j}$. Thus, if $A$ is domination contiguous then so is $\reverse{A}$. As a consequence, proving that $E \cdot F^{\otimes m} \cdot E^T$ is an involution will immediately imply that $\reverse{E} \cdot F^{\otimes m} \cdot (\reverse{E})^T$ is an involution as well. Let us prove the former---that is, let us prove (\ref{eq:nonreversedEncoderInvolution}).

  We start by noting a simple characterization of $F^{\otimes m}$, where $F$ is defined as in \eqref{eq:F1}. Namely, the entry at row $i$ and column $j$ of $F^{\otimes m}$ is easily calculated:
  \begin{equation}
    \label{eq:FijAndDomination}
    (F^{\otimes m})_{i,j} = 
    \begin{cases}
      1 & i \dominates j \; ,\\
      0 & \mbox{otherwise} \; .
    \end{cases}
  \end{equation}

  To see this, consider the recursive definition of $F^{\otimes m}$ given in (\ref{eq:FRecursiveDefintion}). Obviously, $(F^{\otimes m})_{i,j}$ equals $0$ if we are at the upper right $(n/2) \times (n/2)$ block. That is, if $i_{m-1}$ (the most-significant bit of $i$) equals $0$ and $j_{m-1}$ equals $1$. Next, consider the other three blocks and note that for them, $i \dominates j$ iff $i \mod 2^{m-1}$ dominates $j \mod 2^{m-1}$. Since the remaining blocks all contain the same matrix, it suffices to prove the claim for the lower left block. Thus, we continue recursively with $i \mod 2^{m-1}$ and $j \mod 2^{m-1}$.

  Recalling (\ref{eq:ANotation}) and the fact that $|A|=k$, we adopt the following shorthand: for $0 \leq p,q,r < k$ given, let
  \[
  h = \alpha_p \; , \quad i = \alpha_q \; , \quad j = \alpha_r \; .
  \]
  By the above, a straightforward derivation yields that
\ifCLASSOPTIONtwocolumn
  \begin{multline*}
    (E \cdot F^{\otimes m} \cdot E^T)_{p,q} = (F^{\otimes m})_{h,i} \\
    \mbox{and} \quad (E \cdot F^{\otimes m} \cdot E^T)_{q,r} = (F^{\otimes m})_{i,j} \; .
  \end{multline*}
\else
\[
    (E \cdot F^{\otimes m} \cdot E^T)_{p,q} = (F^{\otimes m})_{h,i}
    \quad \text{and}\quad (E \cdot F^{\otimes m} \cdot E^T)_{q,r} = (F^{\otimes m})_{i,j} \; .
\]
\fi
  Thus,
\ifCLASSOPTIONtwocolumn
  \begin{multline}
    \label{eq:productOfFSubmatrices}
    \bigg((E \cdot F^{\otimes m} \cdot E^T) \cdot (E \cdot F^{\otimes m} \cdot E^T)\bigg)_{p,r} \\
    = \sum_{q=0}^{k-1}(E \cdot F^{\otimes m} \cdot E^T)_{p,q} \cdot (E \cdot F^{\otimes m} \cdot E^T)_{q,r} \\
    = \sum_{i \in A} (F^{\otimes m})_{h,i} \cdot (F^{\otimes m})_{i,j} \; .
  \end{multline}
\else
  \begin{align}
    \label{eq:productOfFSubmatrices}
    \bigg((E \cdot F^{\otimes m} \cdot E^T) \cdot (E \cdot F^{\otimes m} \cdot E^T)\bigg)_{p,r}
    &= \sum_{q=0}^{k-1}(E \cdot F^{\otimes m} \cdot E^T)_{p,q} \cdot (E \cdot F^{\otimes m} \cdot E^T)_{q,r} \nonumber\\
    &= \sum_{i \in A} (F^{\otimes m})_{h,i} \cdot (F^{\otimes m})_{i,j} \; .
  \end{align}
\fi
  Proving (\ref{eq:nonreversedEncoderInvolution}) is now equivalent to proving that the right-hand side of (\ref{eq:productOfFSubmatrices}) equals $1$ iff $h$ equals $j$. Recalling (\ref{eq:FijAndDomination}), this is equivalent to showing that if $h \neq j$, then there is an even number of $i \in A$ for which
  \begin{equation}
    \label{eq:h_dominates_i_dominates_j}
    h \dominates i \quad \mbox{and} \quad i \dominates j \; ,
  \end{equation}
  while if $h= j$, then there is an odd number of such $i$. 

  We distinguish between 3 cases.
  \begin{enumerate}
  \item If $h = j$, then there is a single $0 \leq i < n$ for which (\ref{eq:h_dominates_i_dominates_j}) holds. Namely, $i = h = j$. Since $h,j \in A$, we have that $i \in A$ as well. Since $1$ is odd, we are finished with the first case.
  \item If $h \neq j$ and $h \not\dominates j$, then there can be no $i$ for which (\ref{eq:h_dominates_i_dominates_j}) holds. Since $0$ is an even integer, we are done with this case as well.
  \item If $h \neq j$ and $h \dominates j$, then the support of the binary vector $\binary{j} = (j_0,j_1,\ldots,j_{m-1})$ is contained in and distinct from the support of the binary vector $\binary{h} = (h_0,h_1,\ldots,h_{m-1})$. A moment of thought reveals that the number of $0 \leq i < n$ for which (\ref{eq:h_dominates_i_dominates_j}) holds is equal to $2^{w(h) - w(j)}$, where $w(h)$ and $w(j)$ represent the support size of $\binary{h}$ and $\binary{j}$, respectively. Since $h \neq j$ and $h \dominates j$, we have that $w(h) - w(j) > 0$. Thus, $2^{w(h) - w(j)}$ is even. Since $h , j \in A$ and $A$ is domination contiguous, all of the above mentioned $i$ are members of $A$. To sum up, an even number of $i \in A$ satisfy (\ref{eq:h_dominates_i_dominates_j}), as required. \qedhere{}
\end{enumerate}
\end{proof}

Recall \cite[Section X]{Arikan2009} that an $(r,m)$ Reed-Muller code has length $n=2^m$ and is formed by taking the set $A$ to contain all indices $i$ such that the support of $\binary{i}$ has size at least $r$. Clearly, such an $A$ is domination contiguous, as defined in (\ref{eq:dominationContiguous}). Hence, the following is an immediate corollary of Theorem~\ref{theo:dominationContiguousImpliesInvolution}, and states that our encoders are valid for Reed-Muller codes.
\begin{corollary}
  Let the active row set $A$ correspond to an $(r,m)$ Reed-Muller code. Let $E$ and $\reverse{E}$ be defined according to (\ref{eq:ANotation}), (\ref{eq:Edefinition}), (\ref{eq:AReversedNotation}), and (\ref{eq:EReversedDefinition}), where $n=2^m$. Then, $E \cdot F^{\otimes m} \cdot E^T$ and $\reverse{E} \cdot F^{\otimes m} \cdot (\reverse{E})^T$ are involutions. That is, (\ref{eq:nonreversedEncoderInvolution}) and (\ref{eq:reversedEncoderInvolution}) hold and thus our two encoders are valid.
\end{corollary}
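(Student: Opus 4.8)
The plan is to prove that the matrix $M = E \cdot F^{\otimes m} \cdot E^T$ is an involution by computing the $(p,r)$ entry of $M \cdot M$ directly and showing it equals the Kronecker delta $\delta_{p,r}$. First I would record the elementary fact, established by induction on the recursive block structure \eqref{eq:FRecursiveDefintion}, that the entries of $F^{\otimes m}$ are governed entirely by binary domination: $(F^{\otimes m})_{i,j} = 1$ if and only if $i \dominates j$, and $0$ otherwise, as in \eqref{eq:FijAndDomination}. Next I would observe that left-multiplication by $E$ and right-multiplication by $E^T$ simply select the rows and columns of $F^{\otimes m}$ indexed by $A$, so that if I write $h = \alpha_p$, $i = \alpha_q$, $j = \alpha_r$ for the corresponding active-row indices, the matrix entries of $M$ are just entries of $F^{\otimes m}$ at dominating pairs. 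Expanding the product then gives
\[
(M \cdot M)_{p,r} = \sum_{i \in A} (F^{\otimes m})_{h,i} \cdot (F^{\otimes m})_{i,j} = \bigl|\{ i \in A : h \dominates i \dominates j \}\bigr| \pmod 2,
\]
so the whole problem reduces to a \emph{counting modulo 2} statement: I must show the number of $i \in A$ sandwiched between $j$ and $h$ in the domination order is odd exactly when $h = j$.

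The heart of the argument is therefore a case analysis on the domination relationship between $h$ and $j$. When $h = j$, the only sandwiched index is $i = h$ itself, giving a count of $1$ (odd), and this $i$ lies in $A$ trivially. When $h \not\dominates j$, no index can satisfy $h \dominates i \dominates j$ at all (since $\dominates$ is transitive), giving a count of $0$ (even). The interesting case is $h \dominates j$ with $h \neq j$: here the support of $\binary{j}$ is strictly contained in the support of $\binary{h}$, and the indices $i$ with $h \dominates i \dominates j$ are exactly those whose support lies between these two — there are precisely $2^{w(h)-w(j)}$ of them, where $w(\cdot)$ denotes support size. Since the containment is strict, $w(h) - w(j) > 0$ and this count is even.

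The step I expect to be the crux — and where domination contiguity is indispensable — is ensuring that in this last case, \emph{all} $2^{w(h)-w(j)}$ of these sandwiched indices actually belong to $A$, so that the sum over $i \in A$ coincides with the full count over $0 \le i < n$. This is precisely what \eqref{eq:dominationContiguous} guarantees: since $h, j \in A$ and $h \dominates i \dominates j$ for each such $i$, contiguity forces $i \in A$. Without this hypothesis one could easily lose indices from the sum and destroy the parity, exactly as in the $n=4$, $A=\{0,1,3\}$ counterexample given earlier. Finally, I would handle the bit-reversed matrix $\reverse{E} \cdot F^{\otimes m} \cdot (\reverse{E})^T$ for free by noting that $i \dominates j$ if and only if $\reverse{i} \dominates \reverse{j}$ (bit reversal permutes the coordinates of the binary representation, preserving support containment), so $\reverse{A}$ is domination contiguous whenever $A$ is, and the identical argument applies verbatim.
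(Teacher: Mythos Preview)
Your argument faithfully reproduces the paper's proof of Theorem~\ref{theo:dominationContiguousImpliesInvolution} --- the general result that domination contiguity of $A$ implies the involution property --- with the same entry formula \eqref{eq:FijAndDomination}, the same three-case parity count, and the same bit-reversal remark. However, the statement you are asked to prove is the \emph{Corollary} about Reed-Muller codes, whose hypothesis is not ``$A$ is domination contiguous'' but rather ``$A$ is the active row set of an $(r,m)$ Reed-Muller code.'' At the crux step you invoke \eqref{eq:dominationContiguous} as though it were given, but you never verify that the Reed-Muller $A$ actually satisfies it.

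The paper dispatches the Corollary in a single sentence preceding its statement: the Reed-Muller active set is $A = \{\, i : w(\binary{i}) \geq r \,\}$, and since $i \dominates j$ implies $w(i) \geq w(j)$, the set $A$ is upward-closed under $\dominates$ (if $j \in A$ and $i \dominates j$ then $i \in A$), which is strictly stronger than \eqref{eq:dominationContiguous}. Then Theorem~\ref{theo:dominationContiguousImpliesInvolution} applies directly. So you have done all the hard work and matched the paper's approach to the underlying theorem exactly; you just need to add this one-line check that the Reed-Muller $A$ meets the contiguity hypothesis before invoking the machinery you have built.
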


\section{Polar codes satisfy domination contiguity}
\label{sec:polarCodesSatisfyDominationContiguity}
The previous section concluded with proving that our encoders are valid for Reed-Muller codes. Our aim in this section is to prove that our encoders are valid for polar codes. In order to do so, we first define the concept of a (stochastically) upgraded channel.

A channel $W$ with input alphabet $\calX$ and output alphabet $\calY$ is denoted $W : \calX \to \calY$. The probability of receiving $y \in \calY$ given that $x \in \calX$ was transmitted is denoted $W(y|x)$. Our channels will be binary input, memoryless, and output symmetric (BMS).  Binary: the channel input alphabet will be denoted as $\calX = \{0,1\}$. Memoryless: the probability of receiving the vector $(y_i)_{i=0}^{n-1}$ given that the vector $(x_i)_{i=0}^{n-1}$ was transmitted is $\prod_{i=0}^{n-1} W(y_i|x_i)$. Symmetric: there exists a permutation $\pi : \calY \to \calY$ such that that for all $y \in \calY$, $\pi(\pi(y)) = y$ and $W(y|0) = W(\pi(y)|1)$.

We say that a channel $W : \calX \to \calY$ is upgraded with respect to a channel $Q: \calX \to \calZ$ if there exists a channel $\Phi : \calY \to \calZ$ such that concatenating $\Phi$ to $W$ results in $Q$. Formally, for all $x \in \calX$ and $z \in \calZ$,
\[
Q(z|x) = \sum_{y \in \calY} W(y|x) \cdot \Phi(z|y) \; .
\]
We denote $W$ being upgraded with respect to $Q$ as $W \upgraded Q$. As we will soon see, using the same notation for upgraded channels and binary domination is helpful.

Let $W: \calX \to \calY$ be a BMS channel. 
Let $W^-: \calX \to \calY^2$ and $W^+ : \calX \to \calY^2 \times \calX$ be the ``minus'' and ``plus'' transform as defined in \cite{Arikan2009}. That is,
\begin{align}
  W^-(y_0,y_1|u_0) &= \frac{1}{2} \sum_{u_1 \in \{0,1\}} W(y_0|u_0 + u_1) \cdot W(y_1|u_1) \; , \nonumber\\
  W^+(y_0,y_1,u_0|u_1) &= \frac{1}{2} W(y_0|u_0 + u_1) \cdot W(y_1|u_1) \; . \nonumber
\end{align}
The claim in the following lemma seems to be well known in the community, and is very easy to prove. Still, since we have not found a place in which the proof is stated explicitly, we supply it as well.
\begin{lemma}
\label{lemm:plusUpgradedWithRespectToMinus}
  Let $W: \calX \to \calY$ be a BMS channel. Then, $W^+$ is upgraded with respect to $W^-$, 
  \begin{equation}
    \label{eq:plusUpgradedWithRespectToMinus}
    W^+ \upgraded W^- \; .
  \end{equation}
\end{lemma}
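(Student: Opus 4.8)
The plan is to exhibit an explicit degrading channel $\Phi : \calY^2 \times \calX \to \calY^2$ that, when concatenated to $W^+$, reproduces $W^-$. The key observation is that $W^+$ and $W^-$ share essentially the same underlying observation $(y_0,y_1)$; the only difference is that $W^+$ additionally reveals $u_0$ to the receiver, whereas $W^-$ does not. Since having more information can only help, $W^+$ should be upgraded with respect to $W^-$, and the channel $\Phi$ that realizes this should simply discard the extra symbol $u_0$. Concretely, I would define $\Phi$ to map an output $(y_0,y_1,u_0)$ of $W^+$ to the output $(y_0,y_1)$ of $W^-$ by forgetting $u_0$, i.e. $\Phi(y_0',y_1' \mid y_0,y_1,u_0) = 1$ iff $(y_0',y_1') = (y_0,y_1)$ and $0$ otherwise.

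First I would write out the definition of upgradedness with this candidate $\Phi$ and verify the required identity
\[
W^-(y_0,y_1 \mid u_1) = \sum_{(y_0',y_1',u_0)} W^+(y_0',y_1',u_0 \mid u_1) \cdot \Phi(y_0,y_1 \mid y_0',y_1',u_0) \; .
\]
Because $\Phi$ is deterministic and simply strips $u_0$, the right-hand side collapses to $\sum_{u_0 \in \calX} W^+(y_0,y_1,u_0 \mid u_1)$. Substituting the definition of $W^+$, this sum is $\frac{1}{2}\sum_{u_0 \in \{0,1\}} W(y_0 \mid u_0 + u_1)\cdot W(y_1 \mid u_1)$, which is exactly the definition of $W^-(y_0,y_1 \mid u_1)$ after relabeling the summation index. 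So the verification reduces to matching two expressions that are identical term by term.

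The main subtlety — and the only place that requires a moment's care — is bookkeeping the alphabets and summation variables, since $W^+$ and $W^-$ have different output spaces ($\calY^2 \times \calX$ versus $\calY^2$) and the intermediate channel $\Phi$ must be declared with the correct domain and range so that the concatenation typechecks as a channel $\calX \to \calY^2$. One should also confirm that $\Phi$ is a legitimate channel, namely that $\sum_{(y_0,y_1)} \Phi(y_0,y_1 \mid y_0',y_1',u_0) = 1$ for every fixed $(y_0',y_1',u_0)$, which is immediate for a deterministic projection. I do not anticipate any genuine obstacle here; the lemma is true for the structural reason that $W^+$ literally provides a superset of the information available from $W^-$, and the proof is essentially the formalization of "throw away $u_0$." This is why the authors remark that the claim is well known and easy to prove.
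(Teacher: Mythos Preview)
Your proposed degrading map does not work. The channels $W^+$ and $W^-$ have \emph{different inputs}: $W^+$ has input $u_1$, while $W^-$ has input $u_0$. So the slogan ``$W^+$ provides a superset of the information available from $W^-$'' is not literally true in the sense you need. Concretely, marginalizing out $u_0$ from $W^+$ gives
\[
\sum_{u_0} W^+(y_0,y_1,u_0\mid u_1) \;=\; \tfrac{1}{2}\, W(y_1\mid u_1)\,\bigl[W(y_0\mid 0)+W(y_0\mid 1)\bigr],
\]
whereas
\[
W^-(y_0,y_1\mid u_1) \;=\; \tfrac{1}{2}\sum_{v\in\{0,1\}} W(y_0\mid u_1+v)\,W(y_1\mid v).
\]
In your sum the factor $W(y_1\mid u_1)$ does not depend on the summation index, while in $W^-$ the factor $W(y_1\mid v)$ does; no relabeling of the dummy variable reconciles these. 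What your projection actually produces is a channel equivalent to $W$ itself (the component $y_1$ carries $W(\cdot\mid u_1)$ and $y_0$ is independent of the input), so you have effectively shown $W^+\upgraded W$ --- only half of what is needed.

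The paper's proof proceeds precisely by this two-step route: first $W^+\upgraded W$ via the deterministic map $(y_0,y_1,u_0)\mapsto y_1$, and then $W\upgraded W^-$ via a channel that uses the BMS symmetry permutation $\pi$ to simulate the pair $(y_0,y_1)$ from a single output $y$. The symmetry hypothesis is genuinely used in that second step, and your argument never invokes it; that is the missing idea.
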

\begin{proof}
  We prove that $W^+ \upgraded W$ and $W \upgraded W^-$. Since ``$\upgraded$'' is easily seen to be a transitive relation, the proof follows. To show that $W^+ \upgraded W$, take $\Phi : \calY^2 \times \calX \to \calY$ as the channel which maps $(y_0,y_1,u_0)$ to $y_1$ with probability $1$. We now show that $W \upgraded W^-$. Recalling that $W$ is a BMS, we denote the corresponding permutation as $\pi$. We also denote by $\delta()$ a function taking as an argument a condition. The function $\delta$ equals $1$ if the condition is satisfied and $0$ otherwise. With these definitions at hand, we take
\ifCLASSOPTIONtwocolumn
  \begin{multline*}
    \Phi(y_0,y_1|y) \\
    = \frac{1}{2} \big[ W(y_1|0) \cdot \delta(y_0 = y) + W(y_1|1) \cdot \delta(y_0 = \pi(y)  \big] \; . \mbox{\qedhere{}}
  \end{multline*}
\else
\[
    \Phi(y_0,y_1|y)
    = \frac{1}{2} \big[ W(y_1|0) \cdot \delta(y_0 = y) + W(y_1|1) \cdot \delta(y_0 = \pi(y)  \big] \; . \mbox{\qedhere{}}
\]
\fi
\end{proof}
This is a good place to note that our algorithm is applicable to a slightly more general setting. Namely, the setting of compound polar codes as presented in \cite{Madhavifar2013}. The slight alterations needed are left to the reader.

The following lemma claims that both polar transformations preserve the upgradation relation. It is a restatement of \cite[Lemma 4.7]{Korada2009}.
\begin{lemma}
  Let $W:\calX \to \calY$ and $Q:\calX \to \calZ$ be two BMS channels such that $W \upgraded Q$. Then,
  \begin{equation}
    \label{eq:polarizationPreservesUpgradation}
    W^- \upgraded Q^- \qquad \mbox{and} \qquad W^+ \upgraded Q^+
  \end{equation}
\end{lemma}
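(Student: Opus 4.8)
The plan is to take the degrading channel guaranteed by the hypothesis $W \upgraded Q$ and lift it, coordinatewise, to degrading channels witnessing $W^- \upgraded Q^-$ and $W^+ \upgraded Q^+$. By the definition of upgradation, there is a channel $\Phi : \calY \to \calZ$ with $Q(z|x) = \sum_{y \in \calY} W(y|x)\,\Phi(z|y)$ for all $x$ and $z$. Since the minus and plus transforms each combine two independent uses of the underlying channel, the natural guess is to apply $\Phi$ separately to each of the two output coordinates.

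For the minus transform, I would define $\Phi^- : \calY^2 \to \calZ^2$ by the product rule $\Phi^-(z_0,z_1|y_0,y_1) = \Phi(z_0|y_0)\,\Phi(z_1|y_1)$, which is manifestly a channel. To verify that concatenating $\Phi^-$ to $W^-$ yields $Q^-$, I would substitute the definition of $W^-$, interchange the order of summation, and factor the double sum over $(y_0,y_1)$ into a product of two single sums; each single sum collapses, by the defining property of $\Phi$, to a factor of $Q$, reproducing exactly the definition of $Q^-$.

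For the plus transform, the only new feature is the side-information symbol $u_0 \in \calX$ appearing in the outputs of $W^+$ and $Q^+$. I would define $\Phi^+ : \calY^2 \times \calX \to \calZ^2 \times \calX$ so that it applies $\Phi$ to each of $y_0$ and $y_1$ while relaying $u_0$ unchanged, i.e. $\Phi^+(z_0,z_1,u_0'|y_0,y_1,u_0) = \Phi(z_0|y_0)\,\Phi(z_1|y_1)\,\delta(u_0' = u_0)$, using the indicator $\delta(\cdot)$ as in the proof of Lemma~\ref{lemm:plusUpgradedWithRespectToMinus}. The verification is the same factoring computation as before, with the Kronecker delta simply forcing the retained symbol to match and the $\tfrac{1}{2}$ prefactor carrying through untouched.

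The computations involved are routine sum manipulations, so I do not anticipate a genuine obstacle. The one point requiring care is bookkeeping in the plus case: one must confirm that the $\delta$ term correctly identifies the side-information coordinate on both sides and that $\Phi^+$ is a legitimate channel, which it is, being a product of $\Phi$ with a deterministic relay of $u_0$. Note that transitivity of $\upgraded$ is not needed here, since each construction directly exhibits the required intermediate channel.
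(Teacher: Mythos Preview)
Your proposal is correct: the product channel $\Phi^- = \Phi \otimes \Phi$ and the relayed product $\Phi^+ = \Phi \otimes \Phi \otimes \mathrm{id}$ are exactly the right degrading maps, and the factoring computation you describe goes through without obstruction in both cases.

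The paper, however, does not give its own proof of this lemma at all; it simply states the result and attributes it to \cite[Lemma~4.7]{Korada2009}. So there is nothing in the paper to compare against beyond the citation. Your argument is the standard direct construction and is fully self-contained, which is an advantage over a bare reference; the only cost is that you are reproving a known fact rather than invoking it.
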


For a BMS channel $W$ and $0 \leq i < n$, denote by $W_i^{(m)}$ the channel which is denoted ``$W_n^{(i+1)}$'' in \cite{Arikan2009}. By \cite[Proposition 13]{Arikan2009}, the channel $W_i^{(m)}$ is symmetric. The following lemma ties the two definitions of the $\upgraded$ relation.
\begin{lemma}
  Let $W:\calX \to \calY$ be a BMS channel. Let the indices $0 \leq i,j < n$ be given. Then, binary domination implies upgradation. That is,
  \begin{equation}
    \label{eq:binaryDominationImpliesUpgradation}
    i \dominates j  \quad \Longrightarrow \quad W_i^{(m)} \upgraded W_j^{(m)} \; .
  \end{equation}
\end{lemma}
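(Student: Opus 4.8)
The plan is to prove \eqref{eq:binaryDominationImpliesUpgradation} by induction on $m$, using the fact that each bit-channel $W_i^{(m)}$ is obtained from a bit-channel one level down by a single polar transform, with the choice of transform governed by one bit of $i$. Recalling from \cite{Arikan2009} that $W_i^{(m)}$ is the channel ``$W_n^{(i+1)}$'' and unwinding \Arikan's recursion in the paper's $0$-indexed notation gives
\[
W_i^{(m)} =
\begin{cases}
(W_{\lfloor i/2 \rfloor}^{(m-1)})^- & \mbox{if } i_0 = 0 \; ,\\
(W_{\lfloor i/2 \rfloor}^{(m-1)})^+ & \mbox{if } i_0 = 1 \; ,
\end{cases}
\]
where $\lfloor i/2 \rfloor$ has binary representation $(i_1,\ldots,i_{m-1})$; that is, the least-significant bit $i_0$ selects the outermost transform while the remaining bits index the channel at level $m-1$. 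The base case $m=0$ is immediate: there $i=j=0$, $W_0^{(0)} = W$, and $W \upgraded W$ via the identity intermediate channel $\Phi$.

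For the inductive step I would assume \eqref{eq:binaryDominationImpliesUpgradation} at level $m-1$ and take $i \dominates j$ with $0 \leq i,j < 2^m$. Writing $\binary{i} = (i_0,\ldots,i_{m-1})$ and $\binary{j} = (j_0,\ldots,j_{m-1})$, domination means $i_t \geq j_t$ for every $t$. Two consequences drive the argument. First, since $i_t \geq j_t$ for all $t \geq 1$, the truncated indices satisfy $\lfloor i/2 \rfloor \dominates \lfloor j/2 \rfloor$ (with both below $2^{m-1}$), so the induction hypothesis gives $W_{\lfloor i/2 \rfloor}^{(m-1)} \upgraded W_{\lfloor j/2 \rfloor}^{(m-1)}$. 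Second, $i_0 \geq j_0$, so the pair $(i_0,j_0)$ is one of $(0,0)$, $(1,1)$, or $(1,0)$; the pair $(0,1)$ is ruled out by domination.

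It then remains to lift this level-$(m-1)$ upgradation to level $m$ in each case. When $i_0 = j_0$, both channels undergo the \emph{same} transform, so a single application of \eqref{eq:polarizationPreservesUpgradation} (polarization preserves upgradation) yields $W_i^{(m)} \upgraded W_j^{(m)}$ at once. The only substantive case is $(i_0,j_0)=(1,0)$, where $W_i^{(m)} = (W_{\lfloor i/2 \rfloor}^{(m-1)})^+$ while $W_j^{(m)} = (W_{\lfloor j/2 \rfloor}^{(m-1)})^-$ carry \emph{different} transforms. Here I would chain three upgradations using transitivity of $\upgraded$: by \eqref{eq:polarizationPreservesUpgradation}, $(W_{\lfloor i/2 \rfloor}^{(m-1)})^+ \upgraded (W_{\lfloor j/2 \rfloor}^{(m-1)})^+$; by Lemma~\ref{lemm:plusUpgradedWithRespectToMinus} applied to $W_{\lfloor j/2 \rfloor}^{(m-1)}$, $(W_{\lfloor j/2 \rfloor}^{(m-1)})^+ \upgraded (W_{\lfloor j/2 \rfloor}^{(m-1)})^-$; composing gives $(W_{\lfloor i/2 \rfloor}^{(m-1)})^+ \upgraded (W_{\lfloor j/2 \rfloor}^{(m-1)})^-$, which is exactly $W_i^{(m)} \upgraded W_j^{(m)}$.

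The main obstacle is precisely this mixed case $(1,0)$: it is the only place where a strict inequality $i_0 > j_0$ between bits must be turned into an upgradation step across \emph{different} transforms, and it is exactly where the auxiliary Lemma~\ref{lemm:plusUpgradedWithRespectToMinus} ($W^+ \upgraded W^-$) is needed---the two equal-bit cases follow from preservation of upgradation alone. The only bookkeeping to verify carefully is that truncating the least-significant bit preserves domination, i.e. that $\lfloor i/2 \rfloor \dominates \lfloor j/2 \rfloor$, which is immediate from the componentwise definition of $\dominates$.
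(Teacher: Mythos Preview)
Your proof is correct and follows essentially the same route as the paper's: induction on $m$, using $\lfloor i/2\rfloor \dominates \lfloor j/2\rfloor$ to invoke the hypothesis at level $m-1$, then \eqref{eq:polarizationPreservesUpgradation} when the least-significant bits agree and Lemma~\ref{lemm:plusUpgradedWithRespectToMinus} plus transitivity when they differ. The only cosmetic differences are that you start the induction at $m=0$ rather than $m=1$ and spell out the chain in the mixed case $(i_0,j_0)=(1,0)$ more explicitly than the paper does.
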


\begin{proof}
  We prove the claim by induction on $m$. For $m=1$, the claim follows from either (\ref{eq:plusUpgradedWithRespectToMinus}), or the fact that a channel is upgraded with respect to itself, depending on the case. For $m > 1$, we have by induction that
  \[
  W_{\myfloorr{i/2}}^{(m-1)} \upgraded W_{\myfloorr{j/2}}^{(m-1)} \; .
  \]
  Now, if the least significant bits of $i$ and $j$ are the same we use (\ref{eq:polarizationPreservesUpgradation}), while if they differ we use (\ref{eq:plusUpgradedWithRespectToMinus}) and the transitivity of the ``$\upgraded$'' relation.
\end{proof}

We are now ready to prove our second main result.
\begin{theorem}
  \label{theo:polarCodesAreDominationContiguous}
  Let $A$ be the active rows set corresponding to a polar code. Then, $A$ is domination contiguous.
\end{theorem}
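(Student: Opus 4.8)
The plan is to prove the stronger statement that the active set $A$ of a polar code is \emph{upward closed} under the binary domination order, which implies the domination contiguity of \eqref{eq:dominationContiguous} a fortiori. The defining feature of a polar code is that the $k$ active indices are exactly those whose synthetic channels $W_i^{(m)}$ are ``best'' under a quality criterion that is monotone with respect to the upgradation relation. Concretely, I would invoke \eqref{eq:upgradedImpliesBetterBhattacharyya}, which guarantees that $W_i^{(m)} \upgraded W_j^{(m)}$ implies $W_i^{(m)}$ has at least as good a quality parameter (Bhattacharyya, or error probability) as $W_j^{(m)}$. Hence the selection rule is monotone: if $j \in A$ and $W_i^{(m)} \upgraded W_j^{(m)}$, then $i \in A$.

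The bridge from the combinatorial domination order on indices to the analytic upgradation order on channels is the lemma establishing \eqref{eq:binaryDominationImpliesUpgradation}, namely $i \dominates j \Rightarrow W_i^{(m)} \upgraded W_j^{(m)}$. With these two ingredients the argument is immediate: suppose $h, j \in A$ and $h \dominates i \dominates j$. From $i \dominates j$ and \eqref{eq:binaryDominationImpliesUpgradation} we get $W_i^{(m)} \upgraded W_j^{(m)}$, and since $j \in A$, monotonicity of the selection forces $i \in A$, which is precisely \eqref{eq:dominationContiguous}. Notably, neither the upper bound $h \dominates i$ nor the membership $h \in A$ is used, confirming that the set is in fact upward closed.

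I expect the main obstacle to be not the inequality chain but fixing a definition of ``polar code'' general enough to match the claim while keeping the monotonicity step airtight. The delicate point is that the construction must select exactly $k$ channels, and if the quality measure has ties at the selection boundary, an arbitrary tie-break could in principle place $W_j^{(m)}$ inside $A$ while leaving an equally good $W_i^{(m)}$ outside, breaking monotonicity. The clean remedy is to define $A$ through a \emph{threshold} on the monotone quality measure, say $A = \{\, i : Z(W_i^{(m)}) \le \theta \,\}$; then boundary ties are resolved uniformly by the ``$\le$'' and monotonicity is automatic. I would moreover state \eqref{eq:upgradedImpliesBetterBhattacharyya} for whatever monotone functional underlies the construction rather than committing to the Bhattacharyya parameter, which is exactly the extra generality advertised in the introduction. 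Once the definition is pinned down this way, the facts \eqref{eq:FijAndDomination}, \eqref{eq:binaryDominationImpliesUpgradation}, and \eqref{eq:upgradedImpliesBetterBhattacharyya} compose directly to yield the theorem.
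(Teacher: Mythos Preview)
Your proposal is correct and follows essentially the same route as the paper: both argue that $A$ is in fact upward closed under $\dominates$ by chaining \eqref{eq:binaryDominationImpliesUpgradation} with \eqref{eq:upgradedImpliesBetterBhattacharyya}, both flag the tie-breaking subtlety at the selection boundary, and both remark that the argument extends to any monotone quality functional. The only cosmetic difference is your remedy for ties (a threshold definition of $A$) versus the paper's (an explicit swap procedure that terminates by a Hamming-weight argument); note that your threshold fix does not by itself guarantee $|A|=k$, which is why the paper opts for the swap.
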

\begin{proof}
  We must first state exactly what we mean by a ``polar code''. Let the code dimension $k$ be specified. In \cite{Arikan2009}, $A$ equals the indices corresponding to the $k$ channels $W_i^{(m)}$ with smallest Bhattacharyya parameter, where $0 \leq i < n$. Other definitions are possible and will be discussed shortly. However, for now, let us use the above definition.

  Denote the Bhattacharyya parameter of a channel $W$ by $Z(W)$. As is well known, if $W$ and $Q$ are two BMS channels, then
  \begin{equation}
    \label{eq:upgradedImpliesBetterBhattacharyya}
    W \upgraded Q \quad \Longrightarrow \quad Z(W) \leq Z(Q) \; .
  \end{equation}
  For a proof of this fact, see \cite{Kailath1967}.

  We deduce from (\ref{eq:binaryDominationImpliesUpgradation}) and (\ref{eq:upgradedImpliesBetterBhattacharyya}) that if $i \dominates j$, then $Z(W_i^{(m)}) \leq Z(W_i^{(m)})$. Assume for a moment that the inequality is always strict when $i \upgraded j$ and $i \neq j$. Under this assumption, $j \in A$ must imply $i \in A$. This is a stronger claim then (\ref{eq:dominationContiguous}), which is the definition of $A$ being domination contiguous. Thus, under this assumption we are done.

  The previous assumption is in fact true for all relevant cases, but somewhat misleading: The set $A$ is constructed by algorithms calculating with finite precision. It could be the case that $i \neq j$, $i \upgraded j$, but $Z(W_i^{(m)})$ and $Z(W_i^{(m)})$ are approximated by the same number (a tie), or by two close numbers, but in the wrong order. Thus, it might conceptually be the case that $j$ is a member of $A$ while $i$ is not (in practice, we have never observed this to happen). These cases are easy to check and fix, simply by removing $j$ from $A$ and inserting $i$ instead. Note that each such operation enlarges the total Hamming weight of the vectors $\binary{t}$ corresponding to elements $t$ of $A$. Thus, such a swap operation will terminate in at most a finite number of steps. When the process terminates, we have by definition that if $j \in A$ and $i \upgraded j$, then $i \in A$. Thus, $A$ is dominations contiguous.

  Instead of taking the Bhattacharyya parameter as the figure of merit, we could have instead used the (more natural) channel misdecoding probability. That is, the probability of an incorrect maximum-likelihood estimation of the input to the channel given the channel output, assuming a uniform input distribution. Yet another figure of merit we could have taken is the channel capacity. The important point in the proof was that an upgraded channel has a figure of merit value that is no worse. This holds true for the other two options discussed in this paragraph. See \cite[Lemma 3]{Tal2011a} for details and references. We note that although these are natural ways of defining polar codes, they are in some cases sub-optimal. See for example \cite{Mondelli2014} (it is easily proved that our encoder is valid for the scheme in \cite{Mondelli2014} as well).
\end{proof}

\subsection{Application to Shortened Codes}
\label{sec:shortening}
We end this section by discussing two shortening procedures, \cite{Li2015} and \cite{Wang2014}, which are compatible with our systematic encoder. Recall that shortening a code at positions $\Gamma \subseteq \{0,1,\ldots,n-1\}$ means that only codewords $\bfx = (x_0,x_1,\ldots,x_{n-1})$ for which $\gamma \in \Gamma$ implies $x_\gamma = 0$ are part of the newly created code. Since the value of $\bfx$ at positions $\Gamma$ is known to be $0$, these codeword positions are not transmitted. Hence, a word of length $n - |\Gamma|$ is transmitted over the channel.

For our purposes, the polar shortening schemes \cite{Li2015} and \cite{Wang2014} are very similar. In \cite{Li2015}, the set $\Gamma$ is defined as
\[
\Gamma = \{\gamma : \gamma_0 \leq \gamma < n\} \; ,
\]
for a given $\gamma_0$. The encoding in \cite{Li2015} is of the non-bit-reversed type. In contrast, the set $\Gamma$ in \cite{Wang2014} is obtained from the set $\Gamma$ above by applying a bit-reversing operation. The encoding in \cite{Wang2014} is bit-reversed as well.

An important consequence of the above definition of $\Gamma$ is the following. In both settings, the shortening is accomplished by freezing the corresponding indices in the information vector. That is, $\gamma \in \Gamma$ implies that $\gamma \not\in A$. Also, in both settings, the ``channel'' corresponding to a position $\gamma \in \Gamma$ (which is not transmitted) is taken as the noiseless channel when constructing the polar code. The rational is that we know with certainty that the value of $x_\gamma$ is $0$. 

The applicability of our method to the above follows by two simple observations, which we now state without proof. Firstly, Lemma~\ref{lemm:plusUpgradedWithRespectToMinus} and its derivatives continue to hold in the setting in which the underlying channels may be of a different type. Specifically, note that at the lowest level, a plus or minus operation may involve a ``real'' channel and a ``noiseless'' channel. However, the natural analog of (\ref{eq:plusUpgradedWithRespectToMinus}) continues to hold. Namely, a plus operation is still upgraded with respect to a minus operation.

The second observation is that $i \upgraded j$ as well as $\reverse{i} \upgraded \reverse{j}$ imply that $i \geq j$. Thus, in this setting as well, domination contiguity continues to hold. Indeed, consider for concreteness the non-reversed case and  suppose to the contrary that (\ref{eq:dominationContiguous}) does not hold. Namely, we have found $h \upgraded i \upgraded j$ such that $h,j \in A$ but $i \not\in A$. By our first observation, $i \not\in A$ must be the result of $i$ being a shortened index, $i \in \Gamma$. But if $i$ is a shortened index, we have by our second observation that $h$ is a shortened index as well. Hence, $h \in \Gamma$ which implies that $h \not\in A$, contradiction.

\section{Flexible Hardware Encoders}
\label{sec:flex-enc}
The encoder discussed in the previous sections uses two instances---or two passes---of a non-systematic polar encoder to calculate the systematic codeword.
Therefore it is important to have a suitable non-systematic encoder that provides its output in natural or bit-reversed order.

A semi-parallel non-systematic polar encoder design with a throughput of $\mathcal{P}$ bit/cycle, where $\mathcal{P}$ corresponds to the level of parallelism, was presented in \cite{Yoo2015}.
However, it presents its output in pair bit-reversed order---the output is in bit-reversed order if a pair of consecutive bits is viewed as a single entity,---rendering it unsuitable for use with our systematic encoder. This also poses a problem for parallel and semi-parallel decoders, which expect their input either in natural or in bit-reversed order.

We start this section by presenting the architecture for a new non-systematic encoder that presents its output in natural order and has the same $\mathcal{P}$-bit/cycle throughput and $n/\mathcal{P}$-cycle latency as \cite{Yoo2015}. We show the impact of adding length flexibility support, and then utilize it as the core component of a flexible systematic encoder according to the algorithm discussed in this work.

\subsection{Non-Systematic Encoder Architecture}
\label{sec:felx-enc:arch}
Fig.~\ref{fig:enc-arch} shows the proposed architecture for a non-systematic encoder with $n = 16$ and $\mathcal{P} = 4$, where stage boundaries are indicated using dashed lines.
Each stage $S_i$, with index $i$, applies the basic polar transformation to two input bits, $\beta_{i-1}[j]$ and $\beta_{i-1}[j + 2^{i - 1}]$ that are $2^{i-1}$ bits apart in the polar code graph.
Since the input pairs to stages with indices $\in [1, \log \mathcal{P}]$ are available in the same $\mathcal{P}$-bit input and the same clock cycle, these stages are implemented using combinational logic only, as shown for $S_1$ and $S_2$ in the figure.
On the other hand, the two bits processed simultaneously by a stage with an index $i > \log \mathcal{P}$ are not available in the same clock cycle, necessitating the use of delay elements, denoted $D$ in Fig.~\ref{fig:enc-arch}.
Such a stage is implemented using $\mathcal{P}$ 1-bit processing elements operating in parallel, each of which has $2^{i - \log \mathcal{P} - 1}$ delay elements.
A processing element $l$ contains a multiplexer that alternates its output between $\beta_{i-1}[\mathcal{P}t + l - 2^{i-1}] \oplus \beta_{i-1}[\mathcal{P}t + l]$ and $\beta_{i-1}[\mathcal{P}t + l]$ every $2^{i - \log \mathcal{P} - 1}$ clock cycles, where $t$ is the current cycle index.

The resulting encoder has a throughput of $\mathcal{P}$ bit/s, a latency of $n/\mathcal{P}$ cycles, and a critical path that passes from $u[4t + 4]$ to $x[4t]$, similar to the encoder of \cite{Yoo2015}. The critical path can be shortened by inserting pipeline registers at stage boundaries, increasing latency in terms of cycles, but leaving throughput per cycle unaffected. In addition to the output order, the proposed architecture has another advantage over \cite{Yoo2015} in that it can be used to implement a fully serial encoder with $\mathcal{P} = 1$, whereas that of \cite{Yoo2015} can only scale down to $\mathcal{P} = 2$. We note that throughout this work, encoding latency is measured from first data-in to first data-out and all encoders start their operation as soon as the first $\mathcal{P}$ input bits are available.

\begin{figure}[t]
  \centering
  \begin{tikzpicture}[>=stealth]

  \usetikzlibrary{positioning,calc,mux}

  \definecolor{varnode_fill}{RGB}{0,0,0}
  \definecolor{chknode_fill}{RGB}{255,255,255}

  \tikzset{
    chknode/.style={draw,fill=chknode_fill,circle,minimum size=0.3cm, inner sep=0},
    varnode/.style={draw,fill=varnode_fill,circle,minimum size=0.1cm, inner sep=0},
    sep/.style={rectangle,minimum width=0cm, inner sep=0},
    empty/.style={rectangle, inner sep=0},
    bit/.style={circle, inner sep = 0},
    delay/.style={draw,rectangle,inner sep =1mm},
  }
  \tikzset{every mux2 node/.style={draw,minimum width=0.2cm,minimum height=0.6cm,inner sep=1mm,outer sep=0pt}}

  \node[bit] at (0.6, 3.2)  {\scriptsize$u[4t]$};
  \node[bit] at (0.6, 2.2)  {\scriptsize$u[4t\!+\!1]$};
  \node[bit] at (0.6, 1.2)  {\scriptsize$u[4t\!+\!2]$};
  \node[bit] at (0.6, 0.2)  {\scriptsize$u[4t\!+\!3]$};

  \node[coordinate] at (0.7, 3) (u0) {};
  \node[coordinate] at (0.7, 2) (u1) {};
  \node[coordinate] at (0.7, 1) (u2) {};
  \node[coordinate] at (0.7, 0) (u3) {};


  \node[chknode] at (1.2, 3) (n1_0) {$+$};
  \node[varnode] at (1.2, 2) (n1_1) {};
  \node[chknode] at (1.2, 1) (n1_2) {$+$};
  \node[varnode] at (1.2, 0) (n1_3) {};

  \draw[->] (u0) -- (n1_0);
  \draw[-] (u1) -- (n1_1);
  \draw[->] (u2) -- (n1_2);
  \draw[-] (u3) -- (n1_3);

  \draw[->] (n1_1) -- (n1_0);
  \draw[->] (n1_3) -- (n1_2);


  \node[chknode] at (2.3, 3) (n2_0) {$+$};
  \node[chknode] at (1.8, 2) (n2_1) {$+$};
  \node[varnode] at (2.3, 1) (n2_2) {};
  \node[varnode] at (1.8, 0) (n2_3) {};

  \draw[->] (n1_0) -- (n2_0);
  \draw[->] (n1_1) -- (n2_1);
  \draw[-] (n1_2) -- (n2_2);
  \draw[-] (n1_3) -- (n2_3);

  \draw[->] (n2_2) -- (n2_0);
  \draw[->] (n2_3) -- (n2_1);


  \foreach \i in {0, 1, 2, 3} {
    \node[varnode] at (2.8, 3 - \i) (n3_\i) {};
    \node[delay] at (3.3, 3 - \i) (d3_\i) {$D$};
    \node[varnode] at (3.8, 3 - \i) (ne3_\i) {};
    \node[chknode] at (4.3, 3 - \i) (nc3_\i) {$+$};
    \node[mux2] at (4.8, 3 - \i + 0.15) (m3_\i) {};

    \node at ($(m3_\i.in0) + (0.06, 0)$) {\tiny0};
    \node at ($(m3_\i.in1) + (0.06, 0)$) {\tiny1};
  
    \node[coordinate] at (5, 3 - \i) (o3_\i) {};

    \draw[->] (n3_\i) -- (d3_\i);
    \draw[-] (d3_\i) -- (ne3_\i);
    \draw[->] (ne3_\i) -- (nc3_\i);
    \draw[->] (nc3_\i) -- (m3_\i.in1);

    \draw[->] (n3_\i) -- ++(0, -0.4) -| (nc3_\i);
    \draw[->] (ne3_\i) |- (m3_\i.in0);

    \draw[-] (m3_\i.out) -| (o3_\i);

    \draw[-] (n2_\i) -- (n3_\i);
  }


  \foreach \i in {0, 1, 2, 3} {
    \node[varnode] at (5.3, 3 - \i) (n4_\i) {};
    \node[delay] at (5.8, 3 - \i) (d4_\i) {$D$};
    \node[delay] at (6.3, 3 - \i) (dd4_\i) {$D$};
    \node[varnode] at (6.8, 3 - \i) (ne4_\i) {};
    \node[chknode] at (7.3, 3 - \i) (nc4_\i) {$+$};
    \node[mux2] at (7.8, 3 - \i + 0.15) (m4_\i) {};

    \node at ($(m4_\i.in0) + (0.06, 0)$) {\tiny0};
    \node at ($(m4_\i.in1) + (0.06, 0)$) {\tiny1};

    \node[coordinate] at (m4_\i.out) (o4_\i) {};
  
    \draw[->] (n4_\i) -- (d4_\i);
    \draw[-] (dd4_\i) -- (ne4_\i);
    \draw[->] (ne4_\i) -- (nc4_\i);
    \draw[->] (nc4_\i) -- (m4_\i.in1);

    \draw[->] (n4_\i) -- ++(0, -0.4) -| (nc4_\i);
    \draw[->] (ne4_\i) |- (m4_\i.in0);

    \draw[-] (o3_\i) -- (n4_\i);
  }

  \foreach \i in {0, 1, 2, 3} {
    \node[coordinate] at (8.5, 3 - \i + 0.15) (x_\i) {}; 
    \draw[->] (o4_\i) -- (x_\i);
  }
  \node[bit] at (8.5, 3 - 0.15) {\scriptsize$x[4t]$};
  \node[bit] at (8.5, 2 - 0.15) {\scriptsize$x[4t\!+\!1]$};
  \node[bit] at (8.5, 1 - 0.15) {\scriptsize$x[4t\!+\!2]$};
  \node[bit] at (8.5, 0 - 0.15) {\scriptsize$x[4t\!+\!3]$};


  \node[coordinate] at (1.5, 3.8) (l1_t) {};
  \node[coordinate] at (1.5, -0.8) (l1_b) {};
  \draw[-,dashed] (l1_t) -- (l1_b);
  \node at (1, 3.8) {$S_1$};

  \node[coordinate] at (2.6, 3.8) (l2_t) {};
  \node[coordinate] at (2.6, -0.8) (l2_b) {};
  \draw[-,dashed] (l2_t) -- (l2_b);
  \node at (2, 3.8) {$S_2$};

  \node[coordinate] at (5.1, 3.8) (l3_t) {};
  \node[coordinate] at (5.1, -0.8) (l3_b) {};
  \draw[-,dashed] (l3_t) -- (l3_b);
  \node at (3.8, 3.8) {$S_3$};

  \node at (6.8, 3.8) {$S_4$};
\end{tikzpicture}
  \caption{Architecture of the proposed semi-parallel non-systematic polar encoder with $n = 16$ and $\mathcal{P} = 4$.}
  \label{fig:enc-arch}
\end{figure}
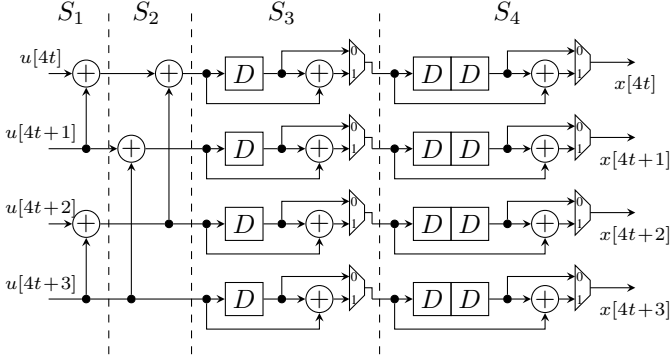

\subsection{Flexible Non-Systematic Encoder}
Since the input $\bfu$ is assumed to contain `0's in the frozen bit locations, the proposed encoder is rate flexible as the input preprocessor can change the location and number of frozen bits without affecting the encoder architecture.

Adapting this architecture to encode any polar code of length $n \leq n_{\text{max}}$ requires extracting data from different stage outputs---indicated using the dashed lines in Fig.~\ref{fig:enc-arch}---in the encoder. The output for a code of length $n$ can be extracted from location $S_{\log n}$ using $\mathcal{P}$ instances of a $\log n_{\text{max}} \times 1$ multiplexer.

The width of the multiplexer can be reduced to $\log n_{\text{max}} - \log \mathcal{P}$ without affecting decoding latency by exploiting the combinational nature of stages $\in [S_1, S_{\log{\mathcal{P}}}]$ and setting inputs with indices $i > n - 1$ to `0'. The modified encoder architecture is illustrated in Fig.~\ref{fig:flex-ns-enc}, where the block labeled `encoder' is the non-systematic encoder shown in Fig.~\ref{fig:enc-arch}. The AND gates are used to mask inputs when $n < \mathcal{P}$. The first AND gate, $\&_0$, will always have it second input set to `1' in this case and will be optimized away by the synthesis tool. It is shown in the figure because it will be used to implement code shortening as described in Section~\ref{sec:enc-short}.

\begin{figure}[t]
  \centering
  \usetikzlibrary{circuits.logic.US}
\begin{tikzpicture}[>=stealth, circuit logic US]

  \usetikzlibrary{positioning,calc,mux}

  \tikzset{
    bit/.style={circle, inner sep = 0},
  }


  \node[and gate, inputs={normal,normal},xscale=0.85] at (-0.2, 0) (and0) {\scriptsize $\&_0$};
  \node[and gate, inputs={normal,normal},xscale=0.85] at (-0.2, -1) (and1) {\scriptsize $\&_1$};
  \node[] at (-0.2, -2) () {$\vdots$};
  \node[and gate, inputs={normal,normal},xscale=0.85] at (-0.2, -3) (andp) {};


  \node[coordinate] at ($(and0.input 1) - (0.3, 0)$) (u0c) {};
  \node[bit,anchor=east] at ($(u0c) + (0, 0.05)$) {\scriptsize$u[\mathcal{P}t]$};

  \node[coordinate] at ($(and1.input 1) - (0.3, 0)$) (u1c) {};
  \node[bit,anchor=east] at ($(u1c) + (0, 0.05)$) {\scriptsize$u[\mathcal{P}t\!+\!1]$};

  \node[coordinate] (upc) at ($(andp.input 1) - (0.3, 0)$) {};
  \node[bit,anchor=east] at ($(upc) + (0, 0.05)$) {\scriptsize$u[\mathcal{P}t\!+\!\mathcal{P}\!-\!1]$};

  \node[coordinate] (um0c) at ($(and0.input 2) - (0.3, 0)$) {};
  \node[bit,anchor=east] at ($(um0c) - (0, 0.05)$) {\scriptsize$n \geq 0$};

  \node[coordinate] (um1c) at ($(and1.input 2) - (0.3, 0)$) {};
  \node[bit,anchor=east] at ($(um1c) - (0, 0.05)$) {\scriptsize$n \geq 1$};

  \node[coordinate] (umpc) at ($(andp.input 2) - (0.3, 0)$) {};
  \node[bit,anchor=east] at ($(umpc) - (0, 0.05)$) {\scriptsize$n \geq \lfloor\frac{\mathcal{P}\!-\!1}{2}\rfloor$};

  \draw[->] (u0c) -- (and0.input 1);
  \draw[->] (um0c) -- (and0.input 2);

  \draw[->] (u1c) -- (and1.input 1);
  \draw[->] (um1c) -- (and1.input 2);

  \draw[->] (upc) -- (andp.input 1);
  \draw[->] (umpc) -- (andp.input 2);

  \node[rectangle,draw,minimum height=4cm, minimum width=2] at (1, -1.5) (enc) {\scriptsize encoder};

  \draw let \p1=(enc.west),\p2=(and0.output) in node[coordinate] (c0) at (\x1,\y2) {};
  \draw[->] (and0.output) -- (c0);

  \draw let \p1=(enc.west),\p2=(and1.output) in node[coordinate] (c1) at (\x1,\y2) {};
  \draw[->] (and1.output) -- (c1);

  \draw let \p1=(enc.west),\p2=(andp.output) in node[coordinate] (cp) at (\x1,\y2) {};
  \draw[->] (andp.output) -- (cp);

  \node[mux4,draw,minimum height=4cm, minimum width=5mm] (m0) at (3, -1.5) {};

  \draw let \p1=(enc.east),\p2=(m0.in0) in node[coordinate] (o0) at (\x1, \y2) {};
  \draw[->] (o0) -- node[coordinate] (b0) {} (m0.in0);
  \node[bit] at ($(b0) + (0, 0.2)$) {\scriptsize$\beta_{\log \mathcal{P}}$};

  \draw let \p1=(enc.east),\p2=(m0.in1) in node[coordinate] (o1) at (\x1, \y2) {};
  \draw[->] (o1) -- node[coordinate] (b1) {} (m0.in1);
  \node[bit] at ($(b1) + (0, 0.2)$) {\scriptsize$\beta_{\log \mathcal{P} + 1}$};

  \draw let \p1=(enc.east),\p2=(m0.in2) in node[coordinate] (o2) at (\x1, \y2) {};
  \node[bit] at ($(o2)!0.5!(m0.in2)$) {$\vdots$};

  \draw let \p1=(enc.east),\p2=(m0.in3) in node[coordinate] (o3) at (\x1, \y2) {};
  \draw[->] (o3) -- node[coordinate] (b3) {} (m0.in3);
  \node[bit] at ($(b3) + (0, 0.2)$) {\scriptsize$\beta_{\log n_{\text{max}}}$};

  \draw let \p1=(m0.out) in node[coordinate] (out) at (3.5, \y1) {};
  \draw[->] (m0.out) -- (out);

  \node[bit,anchor=west] at (out) {\scriptsize$x[\mathcal{P}t, \cdots, \mathcal{P}t + \mathcal{P} - 1]$};

  \node[rectangle,inner sep=1mm] at ($(m0.sel) - (0, 1)$) (sel) {\scriptsize$\log \lceil \frac{n}{\mathcal{P}} \rceil$};
  \draw[->] (sel) -- ++ (0, 0.78);

  \node[bit] at ($(m0.in0) + (0.1, 0)$) {\scriptsize 0};
  \node[bit] at ($(m0.in1) + (0.1, 0)$) {\scriptsize 1};

\end{tikzpicture}
  \caption{Flexible encoder with maximum code length $n_{\text{max}}$ and parallelism $\mathcal{P}$.}
  \label{fig:flex-ns-enc}
\end{figure}
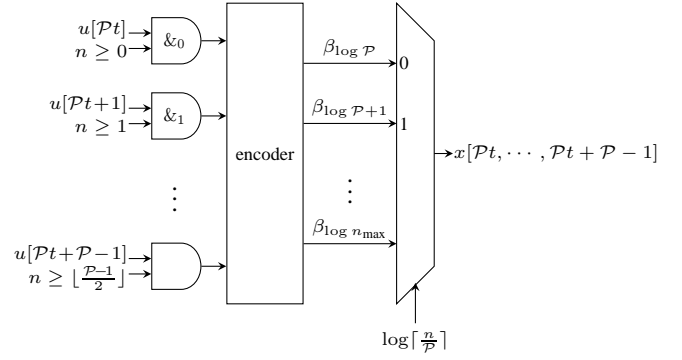

\subsection{Non-Systematic Encoder Implementation}
\label{sec:ns-impl}
Both the rate-flexible and rate and length-flexible versions of the proposed non-systematic encoder were implemented on the Altera Stratix IV EP4SGX530KH40C2 FPGA. We also implemented the encoder of \cite{Yoo2015} on the same FPGA for comparison even though its output order is not suitable for implementing the proposed systematic encoder.
All decoders have a latency of 512 cycles and coded throughput of 32 bits/cycle for $n_{\text{max}} = 16384$ and $\mathcal{P} = 32$.
Table~\ref{tab:ns-impl} presents these implementation results, where the proposed rate-flexible and the rate and length-flexible encoders are denoted $R$-flexible and $Rn$-flexible, respectively.
From the table it can be observed that including length flexibility increases the logic requirements of the design by 27\% due to the extra routing required. It also decreases the maximum achievable frequency, and in turn throughput, by 14\%.
The latency of the decoders is bounded by $n_{\text{max}}/\mathcal{P}$ and increasing $\mathcal{P}$ to 64 reduced it to 0.74 and 0.82 $\mu$s, increasing the throughput to 22 and 20 Gbps, for the $R$- and $Rn$-flexible encoders, respectively. The maximum achievable frequency decreased to 344 and 313 MHz for the two encoders.

\begin{table}[t]
  \centering
  \caption{Implementation of the proposed $R$-flexible and $Rn$-flexible non-systematic encoders compared with that of \cite{Yoo2015} for $n_{\max} = 16384$ and $\mathcal{P} = 32$ on the Altera Stratix IV EP4SGX530KH40C2.}
  \begin{tabular}{r c c c c c c}
    \toprule
    Decoder & LUTs & FF & RAM    & $f$   & Lat.     & T/P\\
            &      &    & (bits) & (MHz) & ($\mu$s) & (Gbps)\\
    \midrule
    \cite{Yoo2015} & 769 & 1,392 & 12,160 & 354 & 1.4 & 11.3\\
    $R$-flexible & 649 & 1,240 & 12,160 & 394 & 1.3 & 12.6\\
    $Rn$-flexible & 838 & 1,293 & 12,160 & 360 & 1.4 & 11.5\\
    \bottomrule
  \end{tabular}
  \label{tab:ns-impl}
\end{table}

The results were obtained using Altera Quartus II 15.0 and verified using both RTL and gate-level simulation with randomized testbenches.

\subsection{Systematic Encoder Architecture}
\label{sec:sys-arch}
With a non-systematic encoder providing its output in a suitable order, we now present the architecture and implementation results for the proposed systematic encoder. As proved in Sections \ref{sec:systematicEncodingDefs} and \ref{sec:dominationContiguiryImpliesInvolution}, the proposed systematic encoder can present its output with parity bits in bit-reversed or natural order locations---even with the non-systematic encoder providing its output in natural order---by changing the location of the frozen bits. We therefore use an $n_{\text{max}}/\mathcal{P} \times \mathcal{P}$-bit memory to store the frozen-bit mask, enabling the encoder to support both parity-bit locations, in addition to rate flexibility.

As mentioned in Section~\ref{subsec:systematicEncoderDescription}, the systematic-encoding process performs two non-systematic encoding passes on the data. These passes can be implemented using two instances of the proposed non-systematic encoder. The output of the first is stored in registers and then masked according the content of the mask memory before being passed to another level of pipeline registers to limit the critical path length. The output of the registers is then passed to the second non-systematic encoder instance, whose output forms the systematic codeword. Such an architecture has the same $\mathcal{P}$-bit/cycle throughput of the component non-systematic encoder with a latency $\mathcal{L} = 2\mathcal{L}_{\text{NS}} + 2$ cycles, where $\mathcal{L}_{\text{NS}}$ is the latency of the non-systematic encoder.

Alternatively, to save implementation resources at the cost of halving the throughput, one instance of the component encoder can be used for both passes. The output of the non-systematic encoder is stored in registers after the first pass and is routed back to the input of the encoder. The systematic codeword becomes available after the second pass.

The systematic encoder of \cite{Arikan2011} can be used in a configuration similar to the proposed high-throughput one. However, it requires multiplication by matrices that change when the frozen bits are changed. Therefore, its implementation requires a configurable parallel matrix multiplier that is significantly more complex than the component non-systematic encoder used in this work. When the encoder of \cite{Arikan2011} is implemented to be rate-flexible and low-complexity, it has a latency of at least $n$ clock cycles; compared to the $2n/\mathcal{P} + 2$ cycle latency of the proposed architecture.

\subsection{Systematic Encoder Implementation}
\label{sec:sys-impl}
Implementation results of the throughput oriented $R$-flexible and $Rn$-flexible encoders are presented in Table~\ref{tab:sys-impl} both with and without pipeline registers in between the two non-systematic encoder instances. In the pipelined version two levels were used: one before and one after the masking operation, since memory access incurred a comparatively long delay. The results show that the pipelined version performs significantly faster than the non-pipelined version, where the clock frequency was increased by 80 MHz for both pipelined encoders and was limited by clock and asynchronous reset distribution. The pipelining yielded throughput values of 9 and 8.4 Gbps for the $R$-flexible and $Rn$-flexible encoders, respectively. The reported amount of RAM included the mask memory, in addition to operations that were converted automatically by the synthesis and mapping tools.

\begin{table}[t]
  \centering
  \caption{Implementation of the proposed $R$-flexible and $Rn$-flexible systematic encoders for $n_{\max} = 16384$ and $\mathcal{P} = 32$ on the Altera Stratix IV EP4SGX530KH40C2.}
  \begin{tabular}{r c c c c c c}
    \toprule
    Decoder & LUTs & FF & RAM    & $f$   & Lat.     & T/P\\
            &      &    & (bits) & (MHz) & ($\mu$s) & (Gbps)\\
    \midrule
    \textbf{Non-Pipelined} & & & & \\
    $R$-flexible  & 1,442 & 2,320 & 36,924 & 206 & 5.0 & 6.6\\
    $Rn$-flexible & 1,782 & 2,381 & 36,924 & 180 & 5.7 & 5.7\vspace{2pt}\\
    \textbf{Pipelined} & & & & \\
    $R$-flexible  & 1,397 & 2,639 & 36,924 & 282 & 3.6 & 9.0\\
    $Rn$-flexible & 1,606 & 2,742 & 36,924 & 264 & 3.9 & 8.4\\
    \bottomrule
  \end{tabular}
  \label{tab:sys-impl}
\end{table}

As in the case of the non-systematic encoder, the throughput is proportional to $\mathcal{P}$ and the latency to $n / \mathcal{P}$. Table~\ref{tab:sys-impl-var} explores the effect of different $n_{\text{max}}$ and $\mathcal{P}$ values on the pipelined $Rn$-flexible encoder. Throughput in excess of 10 Gbps is achievable by the encoder when $\mathcal{P} > 32$. When $n < n_{\text{max}}$, throughput remains unchanged and latency decreases to $n/n_{\text{max}}$ of its original value. For example, when the encoder with $n_{\text{max}} = 16384$ and $\mathcal{P} = 64$ encodes a code with $n = 2048$, throughput remains 15 Gbps and latency decreases to 281 ns.

\begin{table}[t]
  \centering
  \caption{Implementation of the proposed $Rn$-flexible systematic encoder for different $n_{\max}$ and $\mathcal{P}$ values on the Altera Stratix IV EP4SGX530KH40C2.}
  \begin{tabular}{c c c c c c c c}
    \toprule
     $n_{\text{max}}$ & $\mathcal{P}$ & LUTs & FF & RAM    & $f$   & Lat.     & T/P\\
                   &               &      &    & (bits) & (MHz) & ($\mu$s) & (Gbps)\\
    \midrule
    16,384 & 32  & 1,606 & 2,742  & 36,924 & 264 & 3.9 & 8.4\\
    16,384 & 64  & 2,872 & 5,287  & 16,384 & 235 & 2.2 & 15.0\\
    16,384 & 128 & 4,404 & 8,304  & 16,384 & 272 & 0.9 & 34.8\\
    32,768 & 32  & 1,971 & 2,997  & 85,948 & 258 & 7.9 & 8.2\\
    32,768 & 64  & 3,390 & 5,601  & 64,200 & 265 & 3.9 & 16.9\\
    32,768 & 128 & 5,550 & 10,024 & 37,304 & 234 & 2.2 & 29.9\\
    \bottomrule
  \end{tabular}
  \label{tab:sys-impl-var}
\end{table}

\subsection{On Code Shortening}
\label{sec:enc-short}
As discussed in Subsection~\ref{sec:shortening}, the works in \cite{Li2015} and \cite{Wang2014} describe shortening schemes for polar codes in which the last $n - n_s$ information bits in a polar code of length $n$ are replaced with `0's. Those bits are discarded from the systematic codeword before transmission.
The result is that $n_s$ bits containing $k_s$ information bits are transmitted; where $k_s = k - (n - n_s)$.
These schemes are suitable for use with the proposed systematic encoder, yielding a system that can encode normal and shortened polar codes of any length $n \in [2, n_{\text{max}}]$ without any other constraints on the code length or rate.

To adapt our proposed systematic encoder and enable shortening, the second input, $\text{en}_i$, to the AND gates $\&_i$ becomes
\[
\text{en}_i = \begin{cases}
  1 & \text{when } n \geq \lfloor (\mathcal{P}t + i)/2 \rfloor,\\
  0 & \text{otherwise.}
\end{cases}
\]

Adding code shortening ability has a minor effect on the resource utilization of the $Rn$-flexible encoder as can be observed in Table~\ref{tab:sys-short}.

\begin{table}[t]
  \centering
  \caption{Implementation of the proposed $Rn$-flexible systematic encoder with shortening on the Altera Stratix IV EP4SGX530KH40C2.}
  \begin{tabular}{c c c c c c c c}
    \toprule
    $n_{\text{max}}$ & $\mathcal{P}$ & LUTs & FF & RAM    & $f$   & Lat.     & T/P\\
                  &               &      &    & (bits) & (MHz) & ($\mu$s) & (Gbps)\\
    \midrule
    16,384 & 128 & 4,518 & 8,667  & 16,384 & 272 & 0.9 & 34.8\\
    \bottomrule
  \end{tabular}
  \label{tab:sys-short}
\end{table}

\section{Flexible Software Encoders}
\label{sec:flex-sw-enc}
In this section, we present a software implementation of our systematic encoder using single-instruction multiple-data (SIMD) operations. We use both AVX (256-bit) and SSE (128-bit) SIMD extensions, in addition to the built-in types \texttt{uint8\_t}, \texttt{uint16\_t}, \texttt{uint32\_t}, and \texttt{uint64\_t} to operate on multiple bits simultaneously. The width of the selected type determines the encoder parallelism parameter $\mathcal{P}$, e.g. $\mathcal{P} = 8$ for \texttt{uint8\_t}.

The component non-systematic encoder progresses from stage $S_1$ to $S_{\log n}$ and presents its output in natural order. The input to $S_1$ is a packed vector where bits corresponding to frozen locations are set to `0' and information bits are stored in the other locations.
The bit with index $t$ at the output of a stage $S_i$ is calculated according to:
\begin{equation}
\beta_i[t] = \begin{cases}
  \beta_{i - 1}[t - 2^{i - 1}] \oplus \beta_{i - 1}[t] & \text{when } \lfloor t/2^{i - 1}\rfloor \text{ is even,}\\
  \beta_{i - 1}[t] & \text{otherwise.}
\end{cases}
\label{eq:enc-combine}
\end{equation}
This operation is applied directly to $\mathcal{P}$ bits simultaneously in stage $S_i$ if $2^{i - 1} \geq \mathcal{P}$. However, since we can only read and write data in groups of $\mathcal{P}$ bits whose addresses are aligned to $\mathcal{P}$-bit boundaries, operations in stages $S_i$ with $2^{i - 1} < \mathcal{P}$ are performed using a mask-shift-XOR procedure. A $\mathcal{P}$-bit mask $m_i$ is generated for each stage $\in [S_1, S_{\log n}]$ so that:
\[
m_i[t] = \begin{cases}
  0 & \text{when } \lfloor t/2^{i - 1}\rfloor \text{ is even,}\\
  1 & \text{otherwise.}
\end{cases}
\]
The output for these stages is calculated using:
\ifCLASSOPTIONtwocolumn
\begin{align}
\beta_i[t:t + \mathcal{P} - 1] = &\beta_{i-1}[t:t + \mathcal{P} - 1] \oplus \nonumber\\
& ((\beta_{i-1}[t:t + \mathcal{P} - 1]\;\&\;m_i) >> 2^{i - 1}). \nonumber
\end{align}
\else
\[
\beta_i[t:t + \mathcal{P} - 1] = \beta_{i-1}[t:t + \mathcal{P} - 1] \oplus
 ((\beta_{i-1}[t:t + \mathcal{P} - 1]\;\&\;m_i) >> 2^{i - 1}). 
\]
\fi
The index $t$ starts at 0 and is incremented by $\mathcal{P}$ with a final value of $N - \mathcal{P}$. The group of $\mathcal{P}$ bits with indices $\in [t, t + \mathcal{P} - 1]$ is denoted $t:t + \mathcal{P} - 1$. The symbol $\&$ is the bit-wise binary AND operation, and $>>$ is the logical bit right shift operator.

Since SSE operations lack bit shift operations, but include byte shifts, operations for stages ${S_1, S_2, S_3}$ are performed using the \texttt{uint64\_t} native type in the proposed software encoder. AVX version 1 does not provide any shift operations, and version 2 can only perform byte-shifts within 128-bit lanes. Therefore, we use SSE instructions until stage $S_9$, where the encoder switches to using AVX operations.
The masking operation between the two non-systematic encoding passes is applied using $\mathcal{P}$-bit operations and masks.

The resulting software systematic encoder operates on data in-place and requires $n$ bits of additional memory to store the frozen-bit mask, and another $\mathcal{P} \log \mathcal{P}$ bits to store the stage masks. The latency and coded throughput values for the proposed software systematic encoder running on a 3.4 GHz Intel Core i7-2600 are shown in Table~\ref{tab:sw-sys-impl} for $n = 32,768$. $\mathcal{P}$ was varied between 8 and 256. It can be seen that the latency decreases linearly with increasing $\mathcal{P}$ until $\mathcal{P} = 128$. The latency only decreases by 20\% between the SSE (128-bit) and AVX (256-bit) encoders for two reasons: the use of SSE for stages to up $S_9$ in the AVX encoder, and the overhead of loops and conditionals in the encoder. As a result, an encoder specialized for a given $n$ value is expected to be faster.

The speed results indicate that even an embedded 8-bit micro-processor running 1000 times slower would still be capable of transmissions at 500 kbps, eliminating the need for a dedicated hardware encoder for many applications such as remote sensors and some internet of things devices.

\begin{table}[t]
  \centering
  \caption{Latency and coded throughput of a software systematic encoder with $n = 32,768$ and different $\mathcal{P}$ values running on an Intel Core i7-2600.}
  \begin{tabular}{r c c c c c c}
    \toprule
    $\mathcal{P}$ & 8 & 16 & 32 & 64 & 128 & 256\\
    \midrule
    Latency ($\mu$s) & 64.1 & 30.1 & 14.3 & 7.7 & 4.1 & 3.3\\
    T/P (Gbps) & 0.5 & 1.1 & 2.3 & 4.2 & 8.0 & 10.0\\
    \bottomrule
  \end{tabular}
  \label{tab:sw-sys-impl}
\end{table}

\section{Flexible Hardware Decoders}
\label{sec:hw-dec}
We complete the flexible hardware polar coding system in this section by presenting flexible, systematic hardware decoder, which can decode channel messages based on the codewords generated by the encoder presented in Section~\ref{sec:flex-enc}. As discussed in \cite{Sarkis2014}, it is important that the parity bits be in bit-reversed locations to reduce routing complexity and simplify memory accesses.

The original Fast-SSC decoder was capable of decoding all polar codes of a given length: it resembled a processor where the polar code is loaded as a set of instructions \cite{Sarkis2014}.
In this section, we review the Fast-SSC algorithm, describe the architectural modifications necessary to decode any polar code up to a maximum length $n_{\text{max}}$ and analyze the resulting implementation.

\subsection{The Fast-SSC Decoding Algorithm}
\label{sec:bg:fast-ssc}
Our proposed flexible decoders utilize the Fast-SSC presented in \cite{Sarkis2014}. The polar code is viewed as a tree that corresponds to the recursive nature of polar code construction: a polar code of length $n$ is the concatenation of two polar codes of length $n/2$.
In the successive cancellation decoding, the tree is traversed depth first starting from stage $S_{\log n}$ until leaf node in stage $S_0$, corresponding to a constituent code of length $n = 1$ is reached. At that point, the output of is `0' if the leaf node corresponds to a frozen bit. Otherwise, it is calculated from the input log likelihood ratio (LLR) based on threshold detection.
The SSC decoding algorithm, \cite{Alamdar-Yazdi2011}, directly decodes constituent codes of any length that are of rate 0 or rate 1 without traversing their sub-trees. The Fast-SSC algorithm directly decodes single parity check (SPC) and repetition codes, in addition to rate-0 and rate-1 codes, of any length. Fig.~\ref{fig:fast-ssc} shows the Fast-SSC tree of an (8, 4) polar code, where the flow of messages is indicated by the arrows.

\begin{figure}[t]
  \centering
  \subfloat[Code Graph]{\label{fig:sc-graph}\newcommand{\ubit}[1]{$u_{#1}$}
\newcommand{\fbit}[1]{\color{gray}$u_{#1}$}
\begin{tikzpicture}[baseline=(base.center)]

\usetikzlibrary{shapes,positioning,arrows,decorations.markings,fit}

\definecolor{varnode_fill}{RGB}{0,0,0}
\definecolor{chknode_fill}{RGB}{255,255,255}

\tikzset{
  chknode/.style={draw,fill=chknode_fill,circle,minimum size=0.3cm, inner sep=0},
  varnode/.style={draw,fill=varnode_fill,circle,minimum size=0.1cm, inner sep=0},
  sep/.style={rectangle,minimum width=0mm, inner sep=0},
  bit/.style={circle, inner sep = 0}
}

\tikzset{blue dotted/.style={draw=blue!50!white, line width=1pt,
    dash pattern=on 4pt off 4pt,
    inner sep=0.5mm, rectangle, rounded corners}};

\tikzset{blue dotted tight/.style={draw=blue!50!white, line width=1pt,
    dash pattern=on 4pt off 4pt,
    inner sep=0mm, rectangle, rounded corners}};

\matrix[row sep=1mm, column sep=1mm] {
	\node[bit] (n0s0) {\fbit{0}}; & \node[sep] (s0) {}; & \node[chknode] (n0s1) {$+$}; & \node[sep] (s10) {}; &&	\node[chknode] (n0s2) {$+$}; &	\node[sep] (s20) {}; &&&& \node[chknode] (n0s3) {$+$}; &    \node[sep] (s30) {}; & \node[circle] (n0s4) {}; \\ 
	\node[bit] (n1s0) {\fbit{1}}; & \node[sep] (s1) {}; & \node[varnode] (n1s1) {};	   & \node[sep] (s11) {}; &	 \node[chknode] (n1s2) {$+$}; && \node[sep] (s21) {}; &&&  \node[chknode] (n1s3) {$+$}; &&   \node[sep] (s31) {}; & \node[circle] (n1s4) {}; \\ 
	\node[bit] (n2s0) {\fbit{2}}; & \node[sep] (s2) {}; & \node[chknode] (n2s1) {$+$}; & \node[sep] (s12) {}; &&	 \node[varnode] (n2s2) {};    &	 \node[sep] (s22) {}; &&   \node[chknode] (n2s3) {$+$}; &&&  \node[sep] (s32) {}; & \node[circle] (n2s4) {}; \\ 
	\node[bit] (n3s0) {\ubit{3}}; & \node[sep] (s3) {}; & \node[varnode] (n3s1) {};	   & \node[sep] (s13) {}; &	 \node[varnode] (n3s2) {};    && \node[sep] (s23) {}; &	   \node[chknode] (n3s3) {$+$}; &&&& \node[sep] (s33) {}; & \node[circle] (n3s4) {}; \\ 
	\node[bit] (n4s0) {\fbit{4}}; & \node[sep] (s4) {}; & \node[chknode] (n4s1) {$+$}; & \node[sep] (s14) {}; &&	 \node[chknode] (n4s2) {$+$}; &	 \node[sep] (s24) {}; &&&& \node[varnode] (n4s3) {};	&    \node[sep] (s34) {}; & \node[circle] (n4s4) {}; \\ 
	\node[bit] (n5s0) {\ubit{5}}; & \node[sep] (s5) {}; & \node[varnode] (n5s1) {};	   & \node[sep] (s15) {}; &	 \node[chknode] (n5s2) {$+$}; && \node[sep] (s25) {}; &&&  \node[varnode] (n5s3) {};	&&   \node[sep] (s35) {}; & \node[circle] (n5s4) {}; \\ 
	\node[bit] (n6s0) {\ubit{6}}; & \node[sep] (s6) {}; & \node[chknode] (n6s1) {$+$}; & \node[sep] (s16) {}; &&	 \node[varnode] (n6s2) {};    &	 \node[sep] (s26) {}; &&   \node[varnode] (n6s3) {};	&&&  \node[sep] (s36) {}; & \node[circle] (n6s4) {}; \\ 
	\node[bit] (n7s0) {\ubit{7}}; & \node[sep] (s7) {}; & \node[varnode] (n7s1) {};	   & \node[sep] (s17) {}; &	 \node[varnode] (n7s2) {};    && \node[sep] (s27) {}; &	   \node[varnode] (n7s3) {};	&&&& \node[sep] (s37) {}; & \node[circle] (n7s4) {}; \\ 
};
\path[-] (n0s0) edge (n0s1) (n0s1) edge (n0s2) (n0s2) edge (n0s3) (n0s3) edge (n0s4);
\path[-] (n1s0) edge (n1s1) (n1s1) edge (n1s2) (n1s2) edge (n1s3) (n1s3) edge (n1s4);
\path[-] (n2s0) edge (n2s1) (n2s1) edge (n2s2) (n2s2) edge (n2s3) (n2s3) edge (n2s4);
\path[-] (n3s0) edge (n3s1) (n3s1) edge (n3s2) (n3s2) edge (n3s3) (n3s3) edge (n3s4);
\path[-] (n4s0) edge (n4s1) (n4s1) edge (n4s2) (n4s2) edge (n4s3) (n4s3) edge (n4s4);
\path[-] (n5s0) edge (n5s1) (n5s1) edge (n5s2) (n5s2) edge (n5s3) (n5s3) edge (n5s4);
\path[-] (n6s0) edge (n6s1) (n6s1) edge (n6s2) (n6s2) edge (n6s3) (n6s3) edge (n6s4);
\path[-] (n7s0) edge (n7s1) (n7s1) edge (n7s2) (n7s2) edge (n7s3) (n7s3) edge (n7s4);

\path[-] (n0s1) edge (n1s1);
\path[-] (n2s1) edge (n3s1);
\path[-] (n4s1) edge (n5s1);
\path[-] (n6s1) edge (n7s1);

\path[-] (n0s2) edge (n2s2);
\path[-] (n1s2) edge (n3s2);
\path[-] (n4s2) edge (n6s2);
\path[-] (n5s2) edge (n7s2);

\path[-] (n0s3) edge (n4s3);
\path[-] (n1s3) edge (n5s3);
\path[-] (n2s3) edge (n6s3);
\path[-] (n3s3) edge (n7s3);

\node (g_n0s2) [blue dotted, fit = (n0s2) (n1s2) (n2s2) (n3s2) (n0s0)] {};
\node (g_n1s2) [blue dotted, fit = (n4s2) (n5s2) (n6s2) (n7s2) (n7s0)] {};

\node (g_n0s3) [blue dotted, fit = (n0s3) (n1s3) (n2s3) (n3s3) (n4s3) (n5s3) (n6s3) (n7s3)] {};

\node[coordinate] (base) at ($(n3s0)!0.5!(n4s0)$) {};

\end{tikzpicture}}
  \quad
  \subfloat[Fast-SSC Tree]{\label{fig:fast-ssc-tree} \definecolor{deepgreen}{RGB}{8, 130, 25}

\begin{tikzpicture}[baseline=(3_0.center),
        level/.style={level distance = 6mm},
        level 1/.style={sibling distance=19mm, edge from parent/.style={draw,black,line width=2pt}},
        level 2/.style={sibling distance=9mm, edge from parent/.style={draw,black,line width=1pt}},
        level 3/.style={sibling distance=4mm, edge from parent/.style={draw,black,line width=0.5pt}},
        ]

\tikzset{
frozen/.style={thick,draw=black,fill=white,minimum size=3mm,circle, inner sep=0},
fullspace/.style={thick,draw=black,fill=black,minimum size=3mm,circle, inner sep = 0},
mixed/.style={thick,draw=black,fill=gray,minimum size=3mm,circle, inner sep = 0},
rep_mixed/.style={thick,draw=black,pattern=north west lines,pattern color=deepgreen,minimum size=3mm,circle, inner sep = 0},
spc_mixed/.style={thick,draw=black,pattern=crosshatch,pattern color=orange,minimum size=3mm,circle, inner sep = 0},
repspc/.style={thick,draw=black,pattern=vertical lines,pattern color=blue,minimum size=3mm,circle, inner sep = 0}
}

\tikzset{
parallel segment/.style={
   segment distance/.store in=\segDistance,
   segment pos/.store in=\segPos,
   segment length/.store in=\segLength,
   to path={
   ($(\tikztostart)!\segPos!(\tikztotarget)!\segLength/2!(\tikztostart)!\segDistance!90:(\tikztotarget)$) -- 
   ($(\tikztostart)!\segPos!(\tikztotarget)!\segLength/2!(\tikztotarget)!\segDistance!-90:(\tikztostart)$)  \tikztonodes
   }, 
   segment pos=.5,
   segment length=4ex,
   segment distance=-1mm,
},
}

\node[mixed] (3_0){} [grow=left]
	child {node[rep_mixed, label={above:{\footnotesize Repetition}}] (2_0){}
	}
	child {node[spc_mixed, label={below:{\footnotesize SPC}}] (2_1){}
	}
;
	
\draw[->] (3_0) to[parallel segment] node[above right=-1.5mm] {\footnotesize 1} (2_0);
\draw[->] (2_0) to[parallel segment] node[below left=-1.5mm] {\footnotesize 2} (3_0);

\draw[->] (3_0) to[parallel segment] node[above left=-1.5mm] {\footnotesize 3} (2_1);
\draw[->] (2_1) to[parallel segment] node[below right=-1.5mm] {\footnotesize 4} (3_0);

\draw[<-] ($(3_0.east) + (0mm, 0.5mm)$) -- node[above=0mm] {\footnotesize Channel} ($(3_0.east) + (10mm, 0.5mm)$);
\draw[->] ($(3_0.east) + (0mm, -0.5mm)$) -- node[below=0mm] {\footnotesize Codeword} ($(3_0.east) + (10mm, -0.5mm)$);

\end{tikzpicture}}
  \caption{The graph of an (8, 4) polar code and its corresponding Fast-SSC tree representation.}
  \label{fig:fast-ssc}
\end{figure}

\subsection{Stage Indices and Sizes}
\label{sec:hw-dec:stages}
The Fast-SSC decoder in \cite{Sarkis2014} starts decoding a polar code of length $n_{\text{max}}$ at stage $S_{\log n _{\text{max}}}$; where a stage $S_i$ corresponds to a constituent polar code of length $2^i$, as discussed in Section~\ref{sec:bg:fast-ssc}. Since the decoder uses a semi-parallel architecture, the length of the constituent code is used to determine the number of memory words associated with a stage.
The simplest method for that decoder to decode a code of length $n \leq n_{\text{max}}$ is to store the channel LLRs in the memory associated with stage $S_n$ and start the decoding process from there.
This, however, requires significant routing resources since the architecture presented in \cite{Sarkis2014} separates the channel and internal LLRs into different memories for performance reasons.

In the proposed flexible decoder, we calculate the length, $n_v$, of the constituent code associated with a stage $S_i$ as function of $i$, $n$, and $n_{\text{max}}$:
\begin{equation}
n_v(S_i) = 2^i \frac{n}{n_{\text{max}}}.
\label{eq:hw-flex-n}
\end{equation}

The memory allocated for a stage $S_i$ remains the same regardless of $n$ and is always calculated assuming $n = n_{\text{max}}$.
The flexible decoder always starts from $S_{\log n_{\text{max}} }$, corresponding to a polar code of length $n \leq n_{\text{max}}$, performing operations on $n_v(S_i)/(2\mathcal{P})$ LLR values at a stage $S_i$, and proceeds until it encounters a constituent code whose output can be directly estimated according to the rules of the Fast-SSC algorithm.

\subsection{Implementation Results}
\label{sec:hw-dec:impl}
Since memory is accessed as words containing multiple $2\mathcal{P}$ LLR or bit-estimate values, the limits used to determine the number of memory words per stage must be changed to accommodate the new $n$ value. The rest of the decoder implementation remains unchanged from \cite{Sarkis2014}. These limits are now calculated according to \eqref{eq:hw-flex-n} and using the $n$ value provided to the decoder as an input.

Table~\ref{tab:hw:impl-n-max} compares the proposed flexible decoder ($n_{\text{max}} = 32,768$) with the Fast-SSC decoder of \cite{Sarkis2014} ($n = 32,768$) when both are implemented using the Altera Stratix IV EP4SGX530KH40C2 FPGA. The resource requirements are also provided for $n_{\text{max}} = n = 2048$. It can be observed that the change in resource utilization is negligible as a result of the localized change in limit calculations.

When decoding a code of length $n < n_{\text{max}}$, the flexible decoder has the same latency (in clock cycles) as the Fast-SSC decoder for a code of length $n$. Since our $Rn$-flexible decoder implementation has the same operating clock frequency as the $R$-flexible decoder, it also has the same throughput and latency (in time). We note that the decoders presented in this work contain an additional input buffer to store an incoming channel vector while one is being decoded. This is done to enable loading-while-decoding and allows the decoder to sustain its throughput.

The implementation results for $Rn$-flexible decoder supporting code shortening are also included in Table~\ref{tab:hw:impl-n-max}. The main change is the requirement for $n_\text{max}$ more bit of RAM. This is a consequence of shortening being implemented using masking where the LLRs corresponding to shortened bits are replaced with the maximum LLR value based on an $n_\text{max}$-bit mask that is stored in said memory.

\begin{table}[t]
  \centering
  \caption{Implementation of the $Rn$-flexible polar decoder compared to the $R$-flexible decoder of \cite{Sarkis2014} on the Altera Stratix IV EP4SGX530KH40C2.}
  \begin{tabular}{r c c c c c c}
    \toprule
    Decoder & $n$ & $\mathcal{P}$ & LUTs & FF & RAM    & $f$\\
            &     &               &      &    & (bits) & (MHz)\\
    \midrule
    \cite{Sarkis2014} & 2048 & 64 & 6315 & 1608 & 50,072 & 102\\
    Proposed          & 2048 & 64 & 6507 & 1600 & 50,072 & 102\\
    w/ Shortening     & 2048 & 64 & 6451 & 1613 & 52,120 & 102\\
    \midrule
    \cite{Sarkis2014} & 32,768 & 256 & 24,066 & 7,231 & 536,136 & 102\\
    Proposed          & 32,768 & 256 & 23,583 & 7,207 & 536,136 & 102\\
    w/ Shortening     & 32,768 & 256 & 23,593 & 7,219 & 568,904 & 102\\
    \bottomrule
  \end{tabular}
  \label{tab:hw:impl-n-max}
\end{table}

\section{Flexible Software Decoders}
\label{sec:sw-dec}
High-throughput software decoders require vectorization using SIMD instructions in addition to a reduction in the number of branches. However, these two considerations significantly limit the flexibility of the decoder to the point where the lowest latency decoders in literature are compiled for a single polar code \cite{Giard_TSP_2015}. In this section, we present a software Fast-SSC decoder balancing flexibility and decoding latency. The proposed decoder has 37\% higher latency than a fully specialized decoder, but can decode any polar code of length $n \leq n_{\text{max}}$. As will be discussed later in this section, there are two additional advantages to the proposed flexible software decoder: the resulting executable size is an order of magnitude smaller, and it can be used to decode very long polar codes for which an unrolled decoder cannot be compiled.
Since SIMD instructions operate mostly `vertically' on data stored in different vectors, natural indexing is preferable to reversed one; in contrast to the hardware decoders.

\subsection{Memory}
\label{sec:sw-dec:len}
Unlike in hardware decoders, it is simple to access an arbitrary memory location in software decoders. The LLR memory in the proposed software decoder is arranged into stages according to constituent code sizes. When a code of length $n \leq n_{\text{max}}$ is to be decoded, the channel LLRs are loaded into stage $S_{\log n}$, bypassing any stages with a larger index.

When backtracking through the code tree towards stages with high indices, the decoder performs the same operations on bit-estimates as the encoder---namely, binary addition and copying, depending on the index of the output bit in question. Storing the bit-estimates in a one-dimensional array of length $n_{\text{max}}$ bits enables the decoder to only perform the binary addition and store its results, eliminating superfluous copy operations and decreasing latency \cite{Giard_TSP_2015}. For a code of length $n \leq n_{\text{max}}$, the decoder writes its estimates starting from bit index 0. Once decoding is completed, the estimated codeword will occupy the first $n$ bits of the bit-estimate memory, which are provided as the decoder output.

\subsection{Vectorization}
\label{sec:sw-dec:vec}
The unrolled software decoder \cite{Giard_TSP_2015} specifies input sizes for each command at compile time. This enables SIMD vectorization without any loops, but limits the decoder to a specific polar code. To efficiently utilize SIMD instructions while minimizing the number of loops and conditionals, we employ dynamic dispatch in the proposed decoder. Each decoder operation in implemented, using SIMD instructions and C++ templates, for all stage sizes up to $n_{\text{max}}$. These differently sized implementations are stored in array indexed by the logarithm of the stage size. Therefore two branch operations are used: the first to look up the decoding operation, and the second to look up the correct size of that operation. This is significantly more efficient than using loops over the SIMD word size.

\subsection{Results}
\label{sec:sw-dec:results}
We compare the latency of the proposed vectorized flexible decoder with a non-vectorized version and with the fully unrolled decoder of \cite{Giard_TSP_2015} using floating-point values.

Table~\ref{tab:sw:speed-n-max} compares the proposed flexible, vectorized decoder with a flexible, non-explicitly-vectorized decoder (denoted `scalar') and a fully unrolled (denoted `unrolled') one running on an Intel Core i7-2600 with AVX extensions. All decoders were decoding a (32768, 29492) polar code using the Fast-SSC algorithm, floating-point values, and the min-sum approximation. The flexible decoders had $n_{\text{max}} = 32,768$.
From the results in the table, it can be seen that the vectorized decoder has 42.6\% the latency (or 2.3 times the throughput) of the scalar version. Compared to the code-specific unrolled decoder, the proposed decoder has 137\% the latency (or 73\% the throughput).
In addition to the two layers of indirection in the proposed decoder, the lack of inlining contributes to this increase in latency. In the unrolled decoder, the entire decoding flow is known at compile time, allowing the compiler to inline function calls, especially those related to smaller stages. This information is not available to the flexible decoder.

Results for $n < n_{\text{max}}$ are shown in Table~\ref{tab:sw:speed-n} where $n_{\text{max}} = 32,768$ for the flexible decoders and the code used was a (2048, 1723) polar code. The advantage the vectorized decoder has 68\% the latency of the non-vectorized decoder. The gap between the proposed decoder and the unrolled one increases to 2.8 times the latency. These decrease in relative performance of the proposed decoder is a result of using a shorter code where a smaller proportion of stage operations are inlined.

In addition to decoding different codes, the proposed flexible decoder has an advantage over the fully unrolled one in terms of resulting executable size and the maximum length of the polar code to be decoded. The size of the executable corresponding to the proposed decoder with $n_{\text{max}} = 32,768$ was 0.44 MB with 3 KB to store the polar code instructions in an uncompressed textual representation; whereas that of the unrolled decoder was 3 MB. In terms of polar code length, the GNU C++ compiler was unable to compile an unrolled decoder for a code of length $2^{24}$ even with 32 GB of RAM; while the proposed decoder did not exhibit any such issues.

\begin{table}[t]
  \centering
  \caption{Speed of the proposed vectorized decoder compared with that of non-vectorized and fully-unrolled decoders when $n = n_{\text{max}} = 32768$ and $k = 29492$.}
  \begin{tabular}{l c c}
    \toprule
    Decoder & Latency ($\mu$s) & Info. Throughput (Mbps)\\
    \midrule
    Scalar Fast-SSC & 256 & 115\\
    Unrolled Fast-SSC \cite{Giard_TSP_2015} & 109 & 270\\
    Proposed Fast-SSC & 149 & 198\\
    \bottomrule
  \end{tabular}
  \label{tab:sw:speed-n-max}
\end{table}

\begin{table}[t]
  \centering
  \caption{Speed of the proposed vectorized decoder compared with that of non-vectorized and fully-unrolled decoders for a (2048, 1723) code and $n_{\text{max}} = 32768$.}
  \begin{tabular}{l c c}
    \toprule
    Decoder & Latency ($\mu$s) & Info. Throughput (Mbps)\\
    \midrule
    Scalar Fast-SSC & 16.2 & 106\\
    Unrolled Fast-SSC \cite{Giard_TSP_2015} & 4.0 & 430 \\
    Proposed Fast-SSC & 11.1 & 155\\
    \bottomrule
  \end{tabular}
  \label{tab:sw:speed-n}
\end{table}

\section{Conclusion}
In this work, we studied the flexibility in code rate and length of polar encoders and decoders. We proved the correctness of a flexible, parallelizeable systematic polar encoding algorithm and used it to implement high-speed, low-complexity hardware encoders with throughput up to 29 Gbps on FPGA. The proof of correctness was provided not only for polar, but also for Reed-Muller codes. Software encoders were also presented and shown to achieve throughput up to 10 Gbps. We demonstrated rate and length flexible hardware decoders that had similar implementation complexity and the same speed as their rate-only flexible counterparts. Finally, we introduced software decoders that are flexible and able to achieve 73\% the throughput of their unrolled, code-specific counterparts.

\bibliographystyle{IEEEtran}
\bibliography{IEEEabrv,systematic-encoder.bib}

\end{document}